\renewcommand{\theenumi}%
{\arabic{enumi}}
\renewcommand{\theenumii}%
{\theenumi.\arabic{enumii}}
\renewcommand{\theenumiii}%
{\theenumii.\arabic{enumiii}}
\renewcommand{\labelenumii}%
{\theenumii.}
\newtheorem{theorem}{Theorem}[section]
\newtheorem{lemma}[theorem]{Lemma}
\newtheorem{proposition}[theorem]{Proposition}
\newtheorem{corollary}[theorem]{Corollary}
\renewenvironment{proof}[1][Proof.]{\begin{trivlist}
\item[\hskip \labelsep {\bfseries #1}]}{\end{trivlist}}
\def\qed{\ifhmode\unskip\nobreak\fi\hfill
  \ifmmode\square\else$\square$\fi}
\newcommand{\RR}{\mathbb{R}}
\newcommand{\RPZ}{\RR^+_0}
\newcommand{\NN}{\mathbb{N}}
\newcommand{\ZZ}{\mathbb{Z}}
\newcommand{\cclass}[1]{\mbox{{\sf #1}}}
\newcommand{\PP}{\cclass{P}}
\newcommand{\EXP}{\cclass{EXP}}
\newcommand{\cproblem}[1]{\textrm{#1}}
\newcommand{\halt}[1]{\cproblem{HALT}_{#1}}
\newcommand{\halto}{\cproblem{HALT}_{\epsilon}^1}
\newcommand{\oo}{\operatorname{o}}
\newcommand{\OO}{\operatorname{O}}
\newcommand{\OOm}{\operatorname{\Omega}}
\newcommand{\FF}{\mathcal{F}}
\newcommand{\ie}{i.e. }
\newcommand{\eg}{e.g. }
\newcommand{\tup}[3]{#1, #2 \ldots #3}
\newcommand{\seq}[2]{#1, #2\ldots}
\newcommand{\step}[1]{{#1}}
\newcommand{\blank}{\fgeupbracket}
\newcommand{\state}[1]{\textsc{#1}}
\newcommand{\qz}{q_{\state{0}}}
\newcommand{\qa}{q_{\state{acc}}}
\newcommand{\qr}{q_{\state{rej}}}
\title{Verifying Time Complexity of Deterministic Turing Machines\footnote{This work is partially funded by the Slovenian Reaserarch Agency.}}
\date{\today}
\author{David Gajser\\ IMFM, Jadranska 19, 1000 Ljubljana, Slovenija \\\href{mailto:david.gajser@fmf.uni-lj.si}{david.gajser@fmf.uni-lj.si}}
\begin{document}
\maketitle
\thispagestyle{empty}
\paragraph{Abstract.} 
We show that, for all reasonable functions $T(n)=\oo(n\log n)$, we can algorithmically verify whether a given one-tape Turing machine runs in time at most $T(n)$. This is a tight bound on the order of growth for the function $T$ because we prove that, for $T(n)\geq(n+1)$ and $T(n)=\Omega(n\log n)$, there exists no algorithm that would verify whether a given one-tape Turing machine runs in time at most $T(n)$.


We give results also for the case of multi-tape Turing machines. We show that we can verify whether a given multi-tape Turing machine runs in time at most $T(n)$ iff $T(n_0)< n_0+1$ for some $n_0\in\NN$.

We prove a very general undecidability result stating that, for any class of functions $\FF$ that contains arbitrary large constants, we cannot verify whether a given Turing machine runs in time $T(n)$ for some $T\in\FF$. In particular, we cannot verify whether a Turing machine runs in constant, polynomial or exponential time.

\paragraph{Keywords.} Turing machine, running time, decidable, crossing sequence, regular expression.

%

\newpage
\pagenumbering{arabic}
\section{Introduction}

It is tempting to argue about a Turing machine's time complexity. However, from the undecidability of the Halting problem we know that we can not algorithmically tell even whether a given Turing machine halts on a given input. Can we perhaps check whether it is of a specified time complexity? While the answer is NO in most cases, there is an interesting case where the answer is YES: verifying a time bound $T(n)=Cn+D$, $C,D\in\ZZ$, for a given one-tape Turing machine.

The inspiration for most undecidability results in this paper was the answer of Emanuele Viola on the forum Theoretical Computer Science Stack Exchange~\cite{stack}. From his answer it is easy to see that, for any $k\in\NN$, there is no algorithm that would decide whether a given Turing machine runs in time $\OO(n^k)$. This holds even for $k=0$, thus we cannot algorithmically verify whether a Turing machine runs in constant time (see Theorems~\ref{multiO} and~\ref{oneO} for a general result on this topic). But what if the time bound is explicitly given with a function $T(n)$, \ie can we verify whether a given Turing machine runs in time at most  $T(n)$?

The answer is not always negative in this case and it also depends on the model of the given Turing machine. To further discuss this question, let us denote the problems of whether a  [multi-tape, one-tape] Turing machine\footnote{All Turing machines in this article are deterministic.} runs in time at most $T(n)$ with [$\halt{T(n)}$, $\halt{T(n)}^1$].

First, we observe that no matter what model we use, we can always check whether a Turing machine runs in times like $T(n)=n^3$ and $T(n)=5$. In the first case, $T(0)=0$ and thus there exist no Turing machine that would run in time $T(n)$; otherwise the machine would make $0$ steps on the empty input, but all machines make at least one step on each input. In the case $T(n)=5$ we only need to verify the running time of a Turing machine for inputs of length at most 5 to see whether it runs in time $T(n)$, which can easily be done (see  Lemma~\ref{pozMulti} for details). Actually, we will see in Theorem~\ref{iff} and Proposition~\ref{poz1} that if $T(n_0)<n_0+1$ for some $n_0\in\NN$, then the problems $\halt{T(n)}$ and $\halt{T(n)}^1$ are decidable.

The case where $T(n)\geq n+1$, for all $n\in\NN$, is more interesting, especially in the one-tape case. While for multi-tape machines the inequality $T(n)\geq n+1$ already suffices for undecidability of $\halt{T(n)}$ (see Theorem~\ref{iff}), we have to add the condition $T(n)=\OOm(n\log n)$ to prove undecidability of $\halt{T(n)}^1$ (see Theorem~\ref{neg1}). But is this really necessary?

First notice that we can decide $\halt{n+1}^1$, because the head of a one-tape Turing machine $M$ that runs in time $(n+1)$ can move to the left only in the last step (when $M$ goes to a halting state) or in the first step (in this case $M$ has to halt in the next step since $M$ makes at most 2 steps on inputs of length 1). It seems plausible that by similar inference we could show that $\halt{n+D}^1$ is decidable for each $D\in\NN$. But what if we add a multiplicative constant, \ie is $\halt{Cn+D}^1$ decidable for $C,D\in\NN$?

We show not only that this problem is decidable, but also that all problems $\halt{T(n)}^1$ are decidable for tangible enough functions $T(n)=\oo(n\log n)$ (see Theorem~\ref{main}). Thus we give a sharp bound on how fast the function $T(n)$ can increase so that $\halt{T(n)}^1$ is decidable.

We also show that every one-tape Turing machine $M$ that runs in time $T(n)=\oo(n\log n)$ runs in linear time (see Corollary~\ref{time}). It is known~\cite{Hartmanis, Trakhtenbrot} that such $M$ accepts a regular language and as an upgrade of this result we describe an algorithm that constructs an equivalent deterministic finite automaton out of $M$, if integers $C,D$ such that $M$ runs in $Cn+D$ are also given as inputs (see Theorem~\ref{constructive}).

A main tool used in this paper for the analysis of one-tape Turing machines are crossing sequences. They were first studied in1960s by Hartmanis~\cite{Hartmanis}, Hennie~\cite{hennie} and Trakhtenbrot~\cite{Trakhtenbrot}.
In 1968 Hartmanis~\cite{Hartmanis} proved that any one-tape Turing machine which runs in time $\oo(n \log n)$ recognizes a regular language\footnote{Hartmanis acknowledges that Trakhtenbrot~\cite{Trakhtenbrot} came to the same result independently.}. To be more precise, he showed that a one-tape Turing machine which runs in time $\oo(n \log n)$ produces only crossing sequences of bounded length and then he used Hennie's~\cite{hennie} result which tells that such Turing machines recognize only regular languages. 
Later in 1980s Kobayashi~\cite{kobayashi} gave another proof of the same result but, in contrast to Hartmanis' approach, his proof gives a way to compute an upper bound on the length of crossing sequences. This constructiveness is essential in the proof of our Theorem~\ref{main} which states that the problem $\halt{T(n)}^1$ is decidable for nice functions $T(n)=\oo(n \log n)$.

Kobayashi's result was actually a bit stronger than the one by Hartmanis, since Kobayashi showed regularity of every language that is \emph{accepted} in $\oo(n \log n)$ time by some (deterministic) one-tape Turing machine, \ie his $\oo(n \log n)$ bound concerns only the accepting computations. 
In 2009 Pighizzini~\cite{pighizzini} further extended Kobayashi's result by showing regularity of some languages accepted by nondeterministic Turing machines.

The recent paper by Tadaki, Yamakami and Lin~\cite{summary} summarizes results about one-tape linear-time Turing machines of different types, from deterministic to quantum.



\section{Preliminaries}
	\label{prelim}

\paragraph{Basic notation.} 
Let $\NN$ be the set of non-negative integers, let $\RR^+$ be the set of positive real numbers and let $\RPZ$ be the set of non-negative real numbers. For $r\in\RPZ$, we use $\lfloor r\rfloor$ to denote the integer part of $r$ and $\lceil r\rceil$ to denote the smallest integer greater than or equal to $r$. For a function $f:\NN\rightarrow\RPZ$, we define $\lfloor f\rfloor:\NN \rightarrow\NN$, $\lfloor f\rfloor (n)=\lfloor f(n)\rfloor$. 

All logarithms with no base written have base 2. We use $\epsilon$ for the empty word and $|w|$ for the length of word $w$. For words $w_1$ and $w_2$ let $w_1w_2$ denote their concatenation. 

For functions $f,g:\NN\rightarrow\RPZ$, we say that $\left[f(n)=\OO(g(n)),\ f(n)=\OOm(g(n))\right ]$ if there exist $k>0$ and $n_0\in\NN$ such that, for all $n\geq n_0$, it holds $\left[f(n)\leq k\cdot g(n),\ f(n)\geq k\cdot g(n)\right]$. We say that $f(n)=\oo(n\log n)$ if ${\displaystyle \lim_{n\to\infty}}\frac{f(n)}{n\log n}=0$.

Let $f:\NN\rightarrow\NN$ be a function. If there exists a multi-tape Turing-machine $M$ that halts on all inputs and at the end of computation on any input $w$ has $f(w)$ written on the second tape, then we say that $M$ \emph{computes} $f$ and that $f$ is \emph{computable}. Since we will be using the Church-Turing thesis, we will say that a function $f$ is computable iff there exists an algorithm which tells us how to compute it.

For a function $f:\NN\rightarrow\RPZ$, we say that $f$ \emph{computably converges to $\infty$} if, for each $K\in\NN$, we can construct $n_K\in\NN$ (i.e. the function $K\mapsto n_K$ is computable) such that, for all $n\geq n_K$, it holds $f(n)\geq K$.

\paragraph{Models used.} We use two different kinds of deterministic Turing machines as a model of computation: a multi-tape Turing machine and a one-tape Turing machine. A \emph{multi-tape Turing machine} is an 8-tuple $M=(Q,\Sigma,\Gamma,\blank, \delta,\qz,\qa,\qr)$, where $Q$ is a set of states, $\Sigma\neq\emptyset$ an input alphabet, $\Gamma\supseteq\Sigma$ a tape alphabet, $\blank\in\Gamma\backslash \Sigma$ a blank symbol, $\delta$ a transition function (defining also the number of tapes) and $\qz,\qa,\qr\in Q$ pairwise distinct starting, accepting and rejecting states. The machine has a read-only input tape and all of its tapes are both-way infinite\footnote{The results in this paper do not depend on these two parameters.}. There is no output tape included, so on input $w$ there are three possible outcomes of the computation: 
\begin{itemize}
\item $M$ stops in state $\qa$. We say that $M$ accepts the input $w$ and write $M(w)=1$,
\item $M$ stops in state $\qr$. We say that $M$ rejects the input $w$ and write $M(w)=0$,
\item $M$ does not halt. We say that $M$ runs forever and write $M(w)=\infty$.
\end{itemize}

A \emph{one-tape Turing machine} is again an 8-tuple $M=(Q,\Sigma,\Gamma,\blank, \delta,\qz,\qa,\qr)$, where the transition function $\delta$ deals only with one both-way infinite read-write tape, which is also the input tape. For the simplicity we assume that at each step (\ie iteration of transition function) the head moves (left or right). 
Just like above, there are three possible results of the computation.

For a function $T:\NN\rightarrow \RPZ$, if a $\left[\textrm{one-tape, multi-tape}\right ]$ Turing machine $M$, for each $n\in\NN$, makes \textbf{at most} $T(n)$ steps on inputs of length $n$, then we say that  $M$ \emph{runs in time $T(n)$}.


\paragraph{Finite automata.}
In the next proposition we state a known fact about finite automata that will be needed in the proofs of Theorem~\ref{main} and Theorem~\ref{constructive}. All the material for its proof can be found \eg in~\cite[Chapter 1]{sipser}.
\begin{proposition}
	\label{regbasic}
Let $A$ and $B$ be deterministic finite automata that recognize languages $L_A$ and $L_B$. Then there exists an algorithm, which given $A$ and $B$, constructs deterministic finite automata $C_1$, $C_2$ and $C_3$, such that 
\begin{itemize}
\item $C_1$ recognizes the language $L_A\bigcup L_B$,
\item $C_2$ recognizes the language $L_A L_B$ and
\item $C_3$ recognizes the language $L_A^*$.
\end{itemize}
\end{proposition}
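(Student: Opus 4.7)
Since Proposition~\ref{regbasic} is a standard fact about regular languages that we are asked to prove \emph{constructively}, the plan is to exhibit, for each of the three closure operations, an explicit algorithm and then argue its correctness briefly. I would not prove correctness in detail (this is covered in~\cite{sipser}); the emphasis is on writing down constructions whose outputs can be computed by a Turing machine from the descriptions of $A$ and $B$.

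For union, I would use the standard product construction. Writing $A=(Q_A,\Sigma,\delta_A,q_{0,A},F_A)$ and $B=(Q_B,\Sigma,\delta_B,q_{0,B},F_B)$, define $C_1=(Q_A\times Q_B,\Sigma,\delta,(q_{0,A},q_{0,B}),F)$ with $\delta((p,q),a)=(\delta_A(p,a),\delta_B(q,a))$ and $F=(F_A\times Q_B)\cup(Q_A\times F_B)$. A simple induction on the length of an input word shows that $C_1$ reaches $(p,q)$ iff $A$ reaches $p$ and $B$ reaches $q$, so $L(C_1)=L_A\cup L_B$. The construction is obviously algorithmic: enumerate $Q_A\times Q_B$ and compute $\delta$ pointwise.

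For concatenation and Kleene star I would pass through nondeterministic finite automata with $\varepsilon$-transitions. Convert $A$ and $B$ to equivalent $\varepsilon$-NFAs (trivial). For $L_AL_B$, form the $\varepsilon$-NFA obtained by taking the disjoint union of the state sets, keeping $q_{0,A}$ as the start state, declaring $F_B$ as the only accepting set, and adding an $\varepsilon$-transition from every state in $F_A$ to $q_{0,B}$. For $L_A^*$, add a fresh start state $q_0'$, declare $\{q_0'\}\cup F_A$ accepting, and add $\varepsilon$-transitions from $q_0'$ to $q_{0,A}$ and from every state in $F_A$ to $q_{0,A}$. Each of these constructions produces a finite description from finite descriptions in a purely syntactic way, so it is algorithmic. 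Finally, apply the standard subset construction (with $\varepsilon$-closure) to convert the resulting $\varepsilon$-NFAs into equivalent DFAs $C_2$ and $C_3$; the subset construction terminates because the number of subsets of the state set is finite, and each transition of the DFA can be computed by a finite search through the NFA.

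There is no real obstacle here, only some bookkeeping. The one place where a little care is needed is the Kleene star construction: a naive attempt that simply declares $q_{0,A}$ accepting and adds back-edges from $F_A$ to $q_{0,A}$ fails if $q_{0,A}$ already has incoming transitions, because it would recognize extra words; introducing the fresh start state $q_0'$ avoids this pitfall. With these constructions written down, correctness reduces to the standard arguments in~\cite[Chapter 1]{sipser}, and computability is immediate from the finiteness of all the sets involved.
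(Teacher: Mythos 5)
Your constructions are correct and are exactly the standard ones from~\cite[Chapter 1]{sipser}, which is all the paper itself offers: Proposition~\ref{regbasic} is stated without proof, deferring entirely to that reference. Your write-up (product construction for union, $\varepsilon$-NFA constructions plus subset construction for concatenation and star, including the fresh start state for the star to avoid spurious back-loops) matches that intended argument, so there is nothing to add.
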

\paragraph{Manageable functions.}
We say that a function $f:\NN\rightarrow\RPZ$ is \emph{manageable}\footnote{Manageable functions will appear in Theorem~\ref{main}} if there exists an algorithm, which given $\tup{A_0}{A_1}{A_k}\in\NN\backslash\{0\}$ and $\tup{B_0}{B_1}{B_k}\in\NN$, decides whether the inequality $$f(A_0+x_1A_1+x_2A_2+\cdots+x_kA_k)<B_0+x_1B_1+x_2B_2+\cdots+x_kB_k$$ holds for some $\tup{x_1}{x_2}{x_k}\in\NN$. 

Note that there are only integers on the right-hand side of the inequality. Thus the following holds.

\begin{proposition}
	\label{integer}
A function $f:\NN\rightarrow\RPZ$ is manageable iff its integer part $\lfloor f\rfloor$ is manageable.
\end{proposition}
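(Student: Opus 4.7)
The plan is to show that on every admissible input, the manageability question for $f$ and the manageability question for $\lfloor f \rfloor$ have the same Yes/No answer, so any decider for one serves verbatim as a decider for the other. The key elementary observation is that for every $r\in\RPZ$ and every $m\in\NN$, the inequality $r<m$ holds if and only if $\lfloor r\rfloor<m$. Both directions follow immediately from the sandwich $\lfloor r\rfloor\le r<\lfloor r\rfloor+1$ together with the fact that $m$ is an integer: if $r<m$ then $\lfloor r\rfloor\le r<m$ forces $\lfloor r\rfloor\le m-1$, and if $\lfloor r\rfloor<m$ then $r<\lfloor r\rfloor+1\le m$.

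Next I would note that the right-hand side $B_0+x_1B_1+\cdots+x_kB_k$ appearing in the definition of manageability is always a non-negative integer, because all $B_i$ and $x_i$ lie in $\NN$. Setting $m:=B_0+x_1B_1+\cdots+x_kB_k$ and $r:=f(A_0+x_1A_1+\cdots+x_kA_k)\in\RPZ$, the observation above gives that, for every fixed tuple $(x_1,\ldots,x_k)\in\NN^k$, the inequality
\[
f(A_0+x_1A_1+\cdots+x_kA_k)<B_0+x_1B_1+\cdots+x_kB_k
\]
holds if and only if
\[
\lfloor f\rfloor(A_0+x_1A_1+\cdots+x_kA_k)<B_0+x_1B_1+\cdots+x_kB_k.
\]
In particular, the existence of a solution $(x_1,\ldots,x_k)\in\NN^k$ is equivalent for the two inequalities, so on every input $(A_0,\ldots,A_k,B_0,\ldots,B_k)$ an algorithm that witnesses manageability of $f$ outputs exactly the correct answer for $\lfloor f\rfloor$, and vice versa.

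There is no real obstacle here: the content of the proposition is purely the fact that strict comparison against an integer is insensitive to the fractional part on the other side, and this is forced on us by the restriction in the definition of manageable functions that the right-hand side of the inequality be an integer linear combination. The proof is therefore a one-paragraph reduction without any need to inspect the inner workings of the hypothesized algorithm.
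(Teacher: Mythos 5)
Your proof is correct and follows exactly the reasoning the paper intends: the paper justifies this proposition with the single remark that the right-hand side of the inequality is always an integer, which is precisely your observation that $r<m$ iff $\lfloor r\rfloor<m$ for integer $m$. Your write-up simply makes explicit what the paper leaves implicit, so the two arguments coincide.
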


The next proposition gives some examples of manageable functions.
\begin{proposition}
	\label{examples}
Let $f:\NN\rightarrow\NN$  be a computable function. If
\begin{itemize}
\item $f$ is linear (i.e. of the form $Cn+D$) or
\item $\frac{f(n)}{n}$ computably converges to $\infty$,
\end{itemize}
then $f$ is manageable.
\end{proposition}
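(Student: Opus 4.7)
The plan is to handle the two cases separately; in both, the problem is to decide whether
$$f\Big(A_0 + \sum_{i=1}^{k} x_iA_i\Big) \;<\; B_0 + \sum_{i=1}^{k} x_iB_i$$
admits a solution $(x_1,\ldots,x_k)\in\NN^k$.

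\textbf{Linear case.} If $f(n)=Cn+D$, substituting turns the inequality into $\sum_{i=1}^{k} x_i(B_i - CA_i) > f(A_0) - B_0$, a single linear inequality over $\NN^k$. If some coefficient $B_i - CA_i$ is positive, the left-hand side is unbounded above and the inequality is solvable; otherwise every coefficient is non-positive, the supremum of the left-hand side is $0$ attained at $x_1=\cdots=x_k=0$, and solvability reduces to checking $f(A_0) < B_0$ directly. Both subcases are decidable from the inputs.

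\textbf{Superlinear case.} Now assume $f(n)/n$ computably converges to $\infty$. The idea is that $f(n)$ eventually dominates any linear function of the $x_i$'s, so either the solution is trivial or $n := A_0 + \sum x_iA_i$ must lie below a computable threshold, bounding the search to finitely many tuples. Concretely, take $B_{\max}=\max\{B_0,\ldots,B_k\}$, set $K:=B_{\max}+1$, and use the hypothesis to compute $n_K$ with $f(n)\geq Kn$ for every $n\geq n_K$. Since each $A_i\geq 1$,
$$Kn \;\geq\; K\Big(1+\sum_{i=1}^{k}x_i\Big) \;=\; (B_{\max}+1)\Big(1+\sum_{i=1}^{k}x_i\Big) \;>\; B_0 + \sum_{i=1}^{k} x_iB_i,$$
so $n\geq n_K$ makes the strict inequality impossible. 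Hence any solution satisfies $A_0+\sum x_iA_i < n_K$, which (again using $A_i\geq 1$) gives $x_i<n_K$ for each $i$. This leaves finitely many tuples, and each check is effective because $f$ is computable.

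The main obstacle lies in the superlinear case: the \emph{computable} nature of the convergence is essential, since without an algorithm that produces $n_K$ from $K$ we could not reduce to a finite search, even though a suitable threshold would still exist abstractly. By contrast, the linear case is routine; its one subtlety is the observation that unbounded growth in a single coordinate settles solvability at once, avoiding any attempt to resolve Diophantine constraints.
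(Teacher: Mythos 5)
Your proof is correct and follows essentially the same strategy as the paper: the superlinear case is handled by computing a threshold beyond which $f$ dominates the right-hand side (you use a single uniform constant $K=B_{\max}+1$ where the paper picks $C$ with $CA_i\geq B_i$ and per-coordinate bounds $y_i$, but the reduction to a finite search is the same idea), and the linear case is the routine computation the paper leaves to the reader.
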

\begin{proof}
The case when $f$ is linear is easy and is left for the reader, so suppose that $\frac{f(n)}{n}$ computably converges to $\infty$. The next algorithm proves manageability of $f$:
\begin{itemize}
\item  Let $\tup{A_0}{A_1}{A_k}\in\NN\backslash\{0\}$ and $\tup{B_0}{B_1}{B_k}\in\NN$ be given. 
\item Find $C\in\NN$ such that, for all $i=\tup{0}{1}{k}$, it holds $C A_i\geq B_i$.
\item Find $n_C$ such that  $f(n)\geq Cn$ for all $n\geq n_C$. This can be done because $\frac{f(n)}{n}$ computably converges to $\infty$.
\item For $i=\tup{1}{2}{k}$, let  $y_i\in\NN$ be such that $A_0+y_iA_i\geq n_C$.

It follows that the inequality  $$f(A_0+x_1A_1+x_2A_2+\cdots+x_kA_k)\geq B_0+x_1B_1+x_2B_2+\cdots+x_kB_k$$ holds for $\tup{x_1}{x_2}{x_k}\in\NN$ if there exists an index $i$ such that $x_i\geq y_i$.

Indeed, $x_i\geq y_i$ implies $A_0+x_1A_1+x_2A_2+\cdots+x_kA_k\geq n_C$, which implies $f(A_0+x_1A_1+x_2A_2+\cdots+x_kA_k)\geq C( A_0+x_1A_1+x_2A_2+\cdots+x_kA_k)$.
\item Check if the inequality $$f(A_0+x_1A_1+x_2A_2+\cdots+x_kA_k)<B_0+x_1B_1+x_2B_2+\cdots+x_kB_k$$ holds for some non-negative integers $\tup{x_1<y_1}{x_2<y_2}{x_k<y_k}$.\qed
\end{itemize}
\end{proof}

We just proved (using also Proposition~\ref{integer}) that $n,\ 3n+2,\ n\sqrt{\log n},\ n^2,\ 2^n$ are all manageable functions. The next proposition tells us that integer part of a manageable functions cannot be too complicated.
\begin{proposition}
	\label{compf}
An integer part $\lfloor f \rfloor$ of a manageable function $f:\NN\rightarrow\RPZ$ is a computable function.
\end{proposition}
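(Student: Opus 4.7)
The plan is to compute $\lfloor f(n)\rfloor$, for each $n\in\NN$, by running the manageability decider on a short sequence of atomic yes/no queries. The driving identity is
\[
\lfloor f(n)\rfloor \;=\; B^{\ast}-1, \qquad B^{\ast} \;=\; \min\{B\in\NN\setminus\{0\} : f(n)<B\},
\]
and $B^{\ast}$ exists because $f(n)\in\RPZ$ is finite.

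First I would specialise the decider to the degenerate case $k=0$. In this case the clause ``for some $(x_1,\dots,x_k)\in\NN$'' is vacuous, and the algorithm simply answers the atomic question ``is $f(A_0)<B_0$?'' for the supplied $A_0\in\NN\setminus\{0\}$ and $B_0\in\NN$. Given $n\geq 1$, I would iterate $B=1,2,3,\dots$, invoking the decider each time with $A_0=n$ and $B_0=B$, and return $B-1$ at the first affirmative answer. Finiteness of $f(n)$ guarantees termination, and correctness is immediate from the identity above.

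The step I expect to require the most care is the boundary case $n=0$, since manageability requires $A_0\geq 1$ and so the atomic query ``$f(0)<B$?'' is not directly posable. My plan is to invoke the decider with $k\geq 1$ and carefully chosen positive $A_1,\dots,A_k$ and $B_i$'s so that the existentially quantified $(x_1,\dots,x_k)$ is forced, for any given $B_0$, into a bounded and effectively enumerable range; a finite case analysis on the remaining witnesses then pins down the truth value of ``$f(0)<B_0$''. Working out this encoding carefully is the main technical obstacle. Once $\lfloor f(0)\rfloor$ is thereby determined, the algorithm from the previous paragraph computes $\lfloor f\rfloor$ on all of $\NN$, completing the proof.
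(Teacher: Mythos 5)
Your algorithm for $n\geq 1$ is exactly the paper's: query the decider (with the index set of $x$-variables empty) on ``$f(n)<B$'' for $B=1,2,3,\dots$ and stop at the first affirmative answer. The gap is the case $n=0$, which you explicitly leave as ``the main technical obstacle,'' and the route you sketch for it cannot succeed. In every manageability query the argument of $f$ is $A_0+x_1A_1+\cdots+x_kA_k\geq A_0\geq 1$, since $A_0\in\NN\backslash\{0\}$ and the $x_i$ are non-negative. Hence the decider never evaluates $f$ at $0$ under any choice of parameters: two manageable functions agreeing on $\NN\backslash\{0\}$ but differing at $0$ are indistinguishable by the decider, so no encoding, however clever, can ``pin down the truth value of $f(0)<B_0$'' from it.

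The resolution is not an encoding but the standard non-uniformity trick, which is what the paper does: the algorithm simply has the single constant $\lfloor f(0)\rfloor$ hardcoded and returns it when $n=0$. Computability of $\lfloor f\rfloor$ only asserts that \emph{some} algorithm computes it, not that one can be effectively produced from the manageability decider, and for each of the countably many possible values of $\lfloor f(0)\rfloor$ there is a corresponding candidate algorithm, one of which is correct. With that one-line fix your argument is complete; as written, the proof is not.
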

\begin{proof}
For $n\in\NN$, the following algorithm computes $\lfloor f\rfloor (n)$:
\begin{itemize}
\item If $n=0$ return $\lfloor f(0)\rfloor$. Else, return the largest $i$ for which $f(n)\geq i$.\qed
\end{itemize}
\end{proof}

\paragraph{Decidability.}
When talking about decidability, we will define problems as sets of inputs. We will not give the actual encodings\footnote{We assume that we have some canonical encoding of $\left[\textrm{one-tape, multi-tape}\right ]$ Turing machines and pairs of words.}, thus we will have to tell what all possible inputs are. We will do this using ``$\subseteq$'', as can be seen in the definition of a variant of the famous halting problem:
\begin{align*}
\halt{}&=\{(M,w);\textrm{ a one-tape Turing machine }M\textrm{ halts on input }w\}\\
&\subseteq\{(M,w);\ w\textrm{ is an input of a one-tape Turing machine }M\}.
\end{align*}

 We will say that a multi-tape Turing machine $M$ decides (or solves) problem $L\subseteq \bar{L}$, if 
$M$ halts on all inputs that represent elements of $\bar{L}$ and accepts exactly those inputs that represent elements of $L$.
We will say that a problem is decidable, if there exists a multi-tape Turing machine that decides it and undecidable otherwise. When proving decidability of some problem $L\subseteq\bar{L}$, we will only describe an algorithm which solves $L\subseteq \bar{L}$, not giving the construction of a multi-tape Turing machine for this purpose. Thus we will be using the Church-Turing thesis and will confound decidability with the existence of algorithm. 

It is well known that the halting problem is undecidable. We can reduce it to (consequently undecidable) problem
\begin{align*}
\halto&=\{\textrm{one-tape Turing machines that halt on input }\epsilon\}\\
&\subseteq\{\textrm{one-tape Turing machines}\},
\end{align*}
which will prove very useful for later reductions.
\begin{lemma}
	\label{ulema}
The problem $\halto$ is undecidable.
\end{lemma}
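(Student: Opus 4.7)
The plan is to reduce the standard halting problem
$\halt{} = \{(M,w);\ M\text{ is a one-tape Turing machine that halts on input }w\}$,
which is well known to be undecidable, to $\halto$. Given an instance $(M,w)$ of $\halt{}$, I will construct a new one-tape Turing machine $M_{(M,w)}$ whose computation on the empty input first writes $w$ on its tape and then simulates $M$ on $w$; consequently $M_{(M,w)}$ halts on $\epsilon$ if and only if $M$ halts on $w$.

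Concretely, write $M=(Q,\Sigma,\Gamma,\blank,\delta,\qz,\qa,\qr)$ and $w=w_1w_2\cdots w_n$. I form $M_{(M,w)}$ by taking a fresh start state $q_0'$, adding $n$ auxiliary states $p_1,\ldots,p_n$ together with transitions that, starting from $q_0'$ on a blank tape, write $w_1$ and move right, then $w_2$ and move right, $\ldots$, then $w_n$, then return the head to the cell holding $w_1$, and finally enter $M$'s original start state $\qz$. The remainder of the transition function is inherited from $\delta$, and the accepting and rejecting states stay $\qa$ and $\qr$. The case $w=\epsilon$ is even simpler: just let $M_{(M,w)}:=M$. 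Since the paper's one-tape model requires the head to move at every step, the preparatory phase is arranged so that each step moves the head left or right as required; this is a purely mechanical detail.

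The map $(M,w)\mapsto M_{(M,w)}$ is clearly computable (an algorithm reads $M$ and $w$ and writes out the finite description of the new machine). By construction, on input $\epsilon$ the machine $M_{(M,w)}$ performs a fixed number of steps independent of any further computation, arrives at the configuration in which $M$ would start on input $w$, and then behaves exactly like $M$ on $w$. Hence $M_{(M,w)}\in\halto$ iff $(M,w)\in\halt{}$. A decider for $\halto$ would therefore yield a decider for $\halt{}$, contradicting undecidability of the halting problem. There is no substantive obstacle: the only care needed is to respect the one-tape convention that the head moves at every step, which is handled by spelling out the write-and-move transitions in the preamble of $M_{(M,w)}$.
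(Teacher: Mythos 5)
Your proposal is correct and follows essentially the same route as the paper: reduce $\halt{}$ to $\halto$ by constructing from $(M,w)$ a one-tape machine that, on empty input, first writes $w$ on its tape and then simulates $M$ on $w$. The extra detail you give about the write-and-move preamble is fine but not needed beyond what the paper records.
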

\begin{proof}
Suppose that $\halto$ is decidable. Then there exists a multi-tape Turing machine $H_\epsilon$ that decides $\halto$. Let $H$ be an arbitrary one-tape Turing machine and $h$ its input. We will describe an algorithm to decide whether $H$ halts on input $h$, thus contradicting the undecidability of $\halt{}$:
\begin{itemize}
\item Construct a one-tape Turing machine $\tilde{H}$ that on empty input first writes $h$ on its tape and then simulates $H$ on $h$. 
\item Return $H_\epsilon(\tilde{H})$.
\end{itemize}
Thus $\halto$ is undecidable. \qed
\end{proof}


\paragraph{Definitions of problems.}
Next we define the problems that will be in our interest in this article. For a function $T:\NN\rightarrow\RPZ$, define the problems \begin{align*}
\halt{T(n)}=&\{\textrm{multi-tape Turing machines that run in time }T(n)\}\\
\subseteq&\{\textrm{multi-tape Turing machines}\},\\
\halt{T(n)}^1=&\{\textrm{one-tape Turing machines that run in time }T(n)\}\\
\subseteq&\{\textrm{one-tape Turing machines}\}.
\end{align*}
Note that there is no big $\OO$ notation in the above definition, which is handled next. For a class of functions $\FF\subseteq\{T:\NN\rightarrow\RPZ\}$, define 
\begin{align*}
\halt{\FF}=&\{\textrm{multi-tape Turing machines that run in time }T(n)\textrm{ for some }T\in \FF\}\\
\subseteq&\{\textrm{multi-tape Turing machines}\},\\
\halt{\FF}^1=&\{\textrm{one-tape Turing machines that run in time }T(n)\textrm{ for some }T\in \FF\}\\
\subseteq&\{\textrm{one-tape Turing machines}\}.
\end{align*}

If $\FF$ is the class of polynomials, we write $\FF=\PP$, thus $\halt{\PP}$ is the set of all multi-tape polynomial-time Turing machines. If $\FF$ is the class of exponential functions, we write $\FF=\EXP$, thus $\halt{\EXP}$ is the set of all multi-tape exponential-time Turing machines. For a function $T:\NN\rightarrow\RPZ$, if $\FF=\{f:\NN\rightarrow\RPZ;f(n)=\OO(T(n))\}$,  we write $\FF=\OO(T(n))$, thus $\halt{\OO(T(n))}$ is the set of all multi-tape Turing machines that run in $\OO(T(n))$ time.

Problems $\halt{\PP}^1$,  $\halt{\EXP}^1$ and $\halt{\OO(T(n))}^1$ are defined similarly for one-tape Turing machines.


In contrast to already defined problems, we are sometimes interested in problems $L\subseteq\bar{L}$ where $$\bar{L}=\{\textrm{[one-tape, multi-tape] Turing machines that always halt}\}.$$ This problems are motivated by  a belief that ``programmers know that their program will always terminate''. The negative results\footnote{\ie those that show undecidability} in this paper do not address this problem directly, but with a slight modification they could.
\section{Multi-tape machines}
	\label{multi}

We start with multi-tape Turing machines because the results here are simpler than in one-tape case and the reader may get a better feeling of what is going on.
This section includes the results about decidability of problems $\halt{\FF}$ and $\halt{T(n)}$ for different $\FF$ and $T$. We will see that all ``basic'' problems $\halt{\FF}$ are undecidable and we will prove a tight bound on the function $T$ for which the problem $\halt{T(n)}$ is decidable. 

Let us start with the positive result.
\begin{lemma}
	\label{pozMulti}
Let $T:\NN\rightarrow\RPZ$ be a function such that, for some $n_0\in\NN$, it holds $T(n_0)< n_0+1$. Then the problem $\halt{T(n)}$ is decidable.
\end{lemma}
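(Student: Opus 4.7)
The plan is to reduce membership in $\halt{T(n)}$ to a finite check driven by the bound at $n_0$. Set $t_0 = \lfloor T(n_0)\rfloor$, so $t_0 \leq n_0$. First I would dispose of the degenerate case $t_0 = 0$: because $\qz \notin \{\qa,\qr\}$, every Turing machine makes at least one transition on every input, so no $M$ halts in $0$ steps on a length-$n_0$ input. Thus $\halt{T(n)} = \emptyset$, which is trivially decidable.

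Assume $t_0 \geq 1$. The algorithm would begin by simulating $M$ on each of the finitely many inputs of length at most $n_0$, running it for $\lfloor T(k)\rfloor + 1$ steps on a length-$k$ input and outputting ``no'' whenever a simulation fails to halt within $\lfloor T(k)\rfloor$ steps. The key structural observation is that on any length-$n_0$ input a machine satisfying the $T(n_0)$-bound makes at most $t_0 \leq n_0$ transitions, so its input head reaches only positions in the interval $[1-t_0,\,1+t_0]$.

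When $t_0 < n_0$, this interval is contained in $[1-t_0,\,n_0]$, so for any longer input $w$ of length $n \geq n_0$, the first $t_0$ steps of $M$ on $w$ coincide symbol-for-symbol with the first $t_0$ steps on the length-$n_0$ prefix $w[1..n_0]$ (positions $\leq 0$ are blank on both, positions $1,\dots,n_0$ agree, and higher positions are never consulted). Hence $M$ halts in at most $t_0$ steps on every input of length $\geq n_0$, and the running time as a function of the length-$n_0$ prefix takes only finitely many values. Membership in $\halt{T(n)}$ then amounts to (i) passing the finite simulation above and (ii) checking that each such observed running time is at most $T(n)$ for every $n > n_0$; condition (ii) is a finite property of the fixed function $T$ and the algorithm is allowed to have it hard-coded.

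The main obstacle is the borderline subcase $t_0 = n_0$, where the head can reach position $n_0+1$ on a length-$n_0$ input, necessarily at the final step and reading a blank. On a longer input with the same length-$n_0$ prefix, that step instead reads some $c \in \Sigma$, so the computations may diverge and $M$ may continue beyond step $n_0$. I would handle this by additionally simulating $M$ on the finitely many inputs of length $n_0+1$ whose first $n_0$ symbols witness such a divergence, enumerating the finitely many configurations $M$ can be in at the moment of divergence, and verifying that each continuation still respects $T$. Combining this extra finite check with the main simulation yields the full decision procedure.
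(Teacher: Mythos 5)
Your main line is exactly the paper's argument: simulate $M$ on the finitely many inputs of length at most $n_0$, observe that the bound $T(n_0)<n_0+1$ forces $M$ never to consult the $(n_0+1)$-st input cell, conclude that on longer inputs the running time is determined by the length-$n_0$ prefix and takes only finitely many values, and hard-code the remaining comparison with $T(n)$ for $n>n_0$ (the same non-uniformity the paper uses when it says the last test ``can be done in constant time''). That part is correct.

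The ``borderline subcase $t_0=n_0$'' is where the proposal goes wrong, in two ways. First, the subcase is vacuous. In this paper's model a computation halts because the machine \emph{enters} $\qa$ or $\qr$, and the state entered at step $j$ is determined by the symbols under the heads at the start of step $j$, i.e.\ before the $j$-th move; the cell the head lands on in its final step is never read. Since the input head starts on cell $1$ and advances at most one cell per step, cell $n_0+1$ can first be under the head only after step $n_0$, by which time a machine obeying the bound has already halted. So even when $t_0=n_0$ no input cell beyond position $n_0$ is ever consulted, the computation on any long input coincides step for step with the computation on its length-$n_0$ prefix, and there is no divergence; your assertion that the machine reaches cell $n_0+1$ ``and read[s] a blank'' there is the error. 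Second, and more importantly, the patch you propose would not close the argument if the divergence were real: a continuation that genuinely consults cell $n_0+1$ on a length-$(n_0+1)$ input is constrained only by $T(n_0+1)$, which may be $n_0+2$ or larger, so ``verifying that each continuation still respects $T$'' is not a finite check --- the same question would recur at cell $n_0+2$, then $n_0+3$, and nothing you wrote bounds this cascade. The fix is not more simulation but the observation above, after which the $t_0<n_0$ argument covers all of $t_0\le n_0$ uniformly.
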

\begin{proof}
Let $n_0$ be such that $T(n_0)< n_0+1$ and let $M$ be an arbitrary multi-tape Turing machine. We will describe an algorithm to decide whether $M$ runs in time $T(n)$, thus proving decidability of $\halt{T(n)}$:
\begin{itemize}
\item First, check if the running time of $M$ on inputs of lengths $n\leq n_0$ is at most $T(n)$. If not, \textbf{return 0}. Else, let $T_w$ be the maximum number of steps $M$ makes on inputs of length $n_0$ and suppose this maximum is achieved on input $w$.
%
\item If $T_w\leq T(n)$ for all $n>n_0$, \textbf{return 1}. Else, \textbf{return 0}.
\end{itemize}

To prove finiteness and correctness  of the algorithm, note that if $M$ makes at most $T(n_0)$ steps for all inputs of size $n_0$, then $M$ never reads the $(n_0+1)$-st bit of any input. 
In this case $M$ makes at most $T_w$ steps on inputs of size more than $n_0$. Moreover, for each $n\geq n_0$, there exists an input of length $n$ on which $M$ makes exactly $T_w$ steps (all inputs that begin with $w$ are such). There are only finitely many possibilities for $T_w$ because $T_w\leq T(n_0)$, thus the last line of the algorithm can be done in constant time.\qed
\end{proof}

The next lemma is the heart of negative results. Its one-tape analog, Lemma~\ref{logLema}, is much more technical.
\begin{lemma}
	\label{logLema2}
Let $T:\NN\rightarrow\RPZ$ be a function such that, for all $n\in\NN$, it holds $T(n)\geq n+1$. 
Then there exists an algorithm that takes as input a one-tape Turing machine $H$ and returns returns a multi-tape Turing machine $\tilde{H}$ such that
$$H(\epsilon)=\infty\textrm{ \textbf{iff} }\tilde{H}\textrm{ runs in time }(n+1)\textrm{ \textbf{iff} }\tilde{H}\textrm{ runs in time }T(n)\textrm{ \textbf{iff} }\tilde{H}\textrm{ always halts.}$$
\end{lemma}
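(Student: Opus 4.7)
The plan is to construct $\tilde{H}$ so that on an input $w$ of length $n$, it simultaneously scans the input tape (one cell per step) and simulates $H$ on the empty input (one $H$-step per $\tilde{H}$-step) on a separate work tape. Concretely, each transition of $\tilde{H}$ reads one symbol on the input tape (moving the input head one cell right) and performs one step of the simulated $H$. If the input head reads the first blank past $w$, $\tilde{H}$ accepts immediately. If during simulation the simulated $H$ enters $\qa$ or $\qr$, then $\tilde{H}$ enters a dedicated state that loops forever (without moving the input head off the blanks). Building $\tilde{H}$ from $H$ is clearly an algorithm, which is what we need.

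With this construction the step count is transparent: on input of length $n$, the machine takes exactly one step per input cell and then one extra step to read the trailing blank, for a total of $n+1$ steps, \emph{unless} the simulation of $H$ halts first (which requires $H$ to halt on $\epsilon$ in at most $n$ steps), in which case $\tilde{H}$ runs forever.

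To prove the four-way equivalence I will close the cycle in the order $(1)\Rightarrow(2)\Rightarrow(3)\Rightarrow(4)\Rightarrow(1)$. If $H(\epsilon)=\infty$, then on every input of length $n$ the simulation never completes, so $\tilde{H}$ halts in exactly $n+1$ steps, giving (2). The implication $(2)\Rightarrow(3)$ is immediate from the hypothesis $T(n)\geq n+1$, and $(3)\Rightarrow(4)$ is immediate because running in time $T(n)$ forces halting on every input. For $(4)\Rightarrow(1)$ I argue by contrapositive: if $H$ halts on $\epsilon$ after some finite number $k$ of steps, then on every input of length at least $k$ the simulation inside $\tilde{H}$ reaches a halting state of $H$ before the input has been exhausted, so $\tilde{H}$ enters its infinite loop and fails to halt.

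The main obstacle, and the only thing that requires care, is the precise synchronization of the input head with the simulated step counter: we need \emph{one} transition of $\tilde{H}$ to account for exactly one consumed input symbol \emph{and} exactly one simulated step of $H$, with no hidden book-keeping steps. This is straightforwardly arranged on a multi-tape machine by encoding $H$'s state and tape contents on the work tapes and folding the simulation of $H$'s transition into a single transition of $\tilde{H}$; the only slightly delicate point is to make sure that after detecting that $H$ has halted, $\tilde{H}$'s input head stops (or stays on blank cells) so that the ``never halt'' branch is truly an infinite loop rather than something that accidentally terminates when some boundary is reached.
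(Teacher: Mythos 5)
Your construction and verification match the paper's proof: the paper also builds $\tilde{H}$ to use the input tape purely as a step counter, simulating one step of $H$ on $\epsilon$ per step of $\tilde{H}$, halting in exactly $n+1$ steps unless the simulation finishes early, in which case it loops forever. The proposal is correct and just spells out the synchronization and the cycle of implications that the paper leaves implicit.
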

\begin{proof}
Given $H$, we can construct a multi-tape Turing machine $\tilde{H}$ that uses the input tape just for counting steps. On input $w$, $\tilde{H}$ simulates $(|w|+1)$ steps of ($H$ on input $\epsilon$), while itself making exactly $(|w|+1)$ steps before halting. If $H$ halts in less than $(|w|+1)$ steps, $\tilde{H}$ starts an infinite loop.\qed
\end{proof}

\begin{theorem}
	\label{iff}
The problem $\halt{T(n)}$ is undecidable iff $T(n)\geq n+1$ holds for all $n\in\NN$.
\end{theorem}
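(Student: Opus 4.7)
The statement is an ``iff'', so I would split it into the two directions and observe that both have essentially been set up already by the preceding lemmas.

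For the easy direction, namely the contrapositive of ``undecidable $\Rightarrow$ $T(n) \geq n+1$ for all $n$'', I would just invoke Lemma~\ref{pozMulti}: if there exists some $n_0 \in \NN$ with $T(n_0) < n_0 + 1$, then $\halt{T(n)}$ is decidable. So nothing new needs to be done here beyond pointing at that lemma.

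For the non-trivial direction, assume $T(n) \geq n+1$ for all $n \in \NN$; I want to show $\halt{T(n)}$ is undecidable. The plan is a reduction from $\halto$, whose undecidability is given by Lemma~\ref{ulema}. Given an arbitrary one-tape Turing machine $H$, apply the algorithm from Lemma~\ref{logLema2} to produce a multi-tape Turing machine $\tilde{H}$ with the property that
\[
H(\epsilon) = \infty \ \Longleftrightarrow\ \tilde{H}\text{ runs in time }T(n).
\]
Equivalently, $H \notin \halto$ iff $\tilde{H} \in \halt{T(n)}$. Since the map $H \mapsto \tilde{H}$ is computable, any algorithm deciding $\halt{T(n)}$ would decide the complement of $\halto$, and hence $\halto$ itself, contradicting Lemma~\ref{ulema}.

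There is essentially no main obstacle: the whole content of the theorem has been distilled into Lemmas~\ref{pozMulti} and~\ref{logLema2}, and this proof is just the packaging. The only thing I would be careful about is noting that Lemma~\ref{logLema2} requires the hypothesis ``$T(n) \geq n+1$ for all $n$'' in exactly the form assumed in this direction, so the hypotheses line up without any extra work.
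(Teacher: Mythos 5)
Your proposal is correct and follows exactly the same route as the paper: the decidable direction is delegated to Lemma~\ref{pozMulti}, and the undecidable direction reduces $\halto$ (undecidable by Lemma~\ref{ulema}) to $\halt{T(n)}$ via the construction of Lemma~\ref{logLema2}. Nothing is missing.
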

\begin{proof}
Lemma~\ref{pozMulti} proves the only if part. For the if part, let a function $T:\NN\rightarrow\RPZ$ be such that  $T(n)\geq n+1$ for all $n\in\NN$

Suppose that $\halt{T(n)}$ is decidable. For an arbitrary one-tape Turing machine $H$, let $\tilde{H}$ be a multi-tape Turing machine that runs in time $T(n)$ iff $H(\epsilon)=\infty$. By Lemma~\ref{logLema2} we can construct  it from $H$. Since $\halt{T(n)}$ is decidable, we can verify whether $\tilde{H}$ runs in time $T(n)$ and thus we can verify whether $H$ halts on input $\epsilon$. This is a contradiction because $\halto$ is undecidable after Lemma~\ref{ulema}.\qed
\end{proof}

In the beginning of this section we declared that  all ``basic'' problems $\halt{\FF}$ are undecidable. The next theorem with its corollaries reveals what was meant with ``basic''.
\begin{theorem}
	\label{multiO}
Let $\FF\subseteq\{f:\NN\rightarrow\RPZ\}$ be a class of functions that contains arbitrary large constants. Then the problem $\halt{\FF}$ is undecidable.
\end{theorem}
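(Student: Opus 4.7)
The plan is to reduce the undecidable problem $\halto$ to $\halt{\FF}$. Given a one-tape Turing machine $H$, I would effectively construct a multi-tape Turing machine $\tilde{H}$ that on any input $w$ ignores $w$ entirely and simulates $H$ on the empty input $\epsilon$, halting exactly when the simulation does (if ever).

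The correctness hinges on the following dichotomy. If $H$ halts on $\epsilon$, then there is a fixed number $c_H\in\NN$ (depending only on $H$) such that $\tilde{H}$ halts on every input within $c_H$ steps. By the hypothesis that $\FF$ contains arbitrarily large constants, there exists a constant function $T\in\FF$ with $T(n)\geq c_H$ for all $n$, and therefore $\tilde{H}$ runs in time $T(n)$, giving $\tilde{H}\in\halt{\FF}$. Conversely, if $H$ does not halt on $\epsilon$, then $\tilde{H}$ does not halt on any input, so for no function $T$ does $\tilde{H}$ run in time $T(n)$; in particular $\tilde{H}\notin\halt{\FF}$. Hence $\tilde{H}\in\halt{\FF}$ iff $H\in\halto$, and a decision procedure for $\halt{\FF}$ would yield one for $\halto$, contradicting Lemma~\ref{ulema}.

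I do not expect any real obstacle. The argument is strictly simpler than that of Theorem~\ref{iff}, because the ``arbitrarily large constants'' hypothesis does all the work of absorbing the unknown running time of $H$ on $\epsilon$ into some member of $\FF$; unlike in Lemma~\ref{logLema2} we therefore do not need to engineer $\tilde{H}$ so that its running time depends on $|w|$. The only point that must be stated carefully is the precise reading of ``contains arbitrarily large constants'', which the proof uses in the form: for every $K\in\NN$ there exists a constant function $c\in\FF$ with $c\geq K$.
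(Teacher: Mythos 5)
Your proof is correct, and it takes a genuinely different and simpler route than the paper's. The paper splits into two cases according to whether $\FF$ contains a function $T$ with $T(n)\geq n+1$ for all $n$: if not, it builds a \emph{clocked} simulation of $H$ (so that $\tilde{H}$ always halts, and runs in constant time iff $H$ halts on $\epsilon$) and then invokes the observation from Lemma~\ref{pozMulti} that a machine obeying any bound with $T(n_0)<n_0+1$ already runs in constant time; if so, it invokes Lemma~\ref{logLema2}, which forces the running time of $\tilde{H}$ to track $n+1$ and reverses the direction of the reduction ($\tilde{H}\in\halt{\FF}$ iff $H(\epsilon)=\infty$). Your single unclocked simulation avoids both branches: because $\tilde{H}$ never consults its input, its step count is literally the same constant $c_H$ on every input when $H$ halts on $\epsilon$, and the hypothesis on $\FF$ supplies a constant function dominating $c_H$; when $H$ diverges on $\epsilon$, $\tilde{H}$ diverges on every input and hence runs in time $T(n)$ for no $T:\NN\rightarrow\RPZ$ at all, since ``runs in time $T(n)$'' requires at most $T(n)$ steps (in particular halting) on every input. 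Both directions are airtight, the reduction is clearly effective, and the only hypothesis used is exactly the one you isolate. What the paper's longer route buys is mainly structural: it parallels the proof of the one-tape analogue (Theorem~\ref{oneO}), where the second branch genuinely needs the delicate Lemma~\ref{logLema}, and its first branch routes the reduction through machines that always halt, a step toward the variant problem (restricted to always-halting machines) mentioned in the preliminaries --- though your construction, which ignores its input, would in fact also adapt directly to one-tape machines by working entirely to the left of cell $0$.
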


\begin{proof}
The proof is by contradiction, so let a multi-tape Turing machine $H_{\FF}$ decide $\halt{\FF}$ for some class $\FF$ of functions that contains arbitrary large constants. Define the class  $F=\{T\in \FF,\ T(n)\geq n+1\textrm{ for all }n\}$ and consider two separate cases:
\begin{enumerate}
\item If $F$ is empty, then the following algorithm solves $\halto$:
\begin{itemize}
\item Given a one-tape Turing machine  $H$, construct a multi-tape Turing machine $\tilde{H}$ that on input $w$
\begin{itemize}
\item simulates $|w|$ steps of ($H$ on input $\epsilon$).
\item If $H$ halts in less than $|w|$ steps, then $\tilde{H}$ also halts and does not make any additional steps. 
\item Else, $\tilde{H}$ makes at least one additional arbitrary step and halts. 
\end{itemize}

We do not need $\tilde{H}$ to efficiently simulate $H$, but it is necessary that $\tilde{H}$ runs in constant time if $H$ halts on $\epsilon$.
\item Return $H_\FF(\tilde{H})$.
\end{itemize}

It is clear from the algorithm that $H$ halts on input $\epsilon$ \textbf{iff} $\tilde{H}$ runs in constant time \textbf{iff} $\tilde{H}$ runs in time $\tilde{T}(n)$ for some $\tilde{T}(n)\in\FF$.  The last \textbf{iff} follows from $F=\emptyset$ (as seen in the proof of Lemma~\ref{pozMulti}, a Turing machine that runs in time ${T}(n)$ such that  ${T}(n_0)<n_0+1$  for some $n_0\in\NN$, runs in constant time) and the fact that $\FF$ contains arbitrary large constants.
%
%
%
It follows that the above algorithm decides $\halto$, which is a contradiction.
\item If $F$ is not empty, then we can (again) solve $\halto$:

For an arbitrary one-tape Turing machine $H$, use Lemma~\ref{logLema2} to construct a multi-tape Turing machine $\tilde{H}$ which runs in time $(n+1)$ \textbf{iff} $\tilde{H}$ always halts \textbf{iff} $H(\epsilon)=\infty$.
Now $\left(1-H_{\FF}(\tilde{H})\right)$ tells whether  $H$ halts on input $\epsilon$.

Since $\halto$ is undecidable, we have a contradiction.
\end{enumerate}

To sum up, $\halt{\FF}$ is undecidable.\qed
\end{proof}
This theorem has some interesting corollaries.
\begin{corollary}
The problems  $\halt{\PP}$, $\halt{\EXP}$ and $\halt{\OO(T(n))}$ for $T(n)=\OOm(1)$ are undecidable.
\end{corollary}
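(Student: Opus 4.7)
My plan is to derive the corollary as an immediate consequence of Theorem~\ref{multiO}, by verifying that each of the three classes of functions ($\PP$, $\EXP$, and $\OO(T(n))$ for $T(n)=\OOm(1)$) contains arbitrarily large constants. Once this is established, Theorem~\ref{multiO} directly gives undecidability in all three cases.

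For $\PP$: for every $C\in\NN$ the constant function $T(n)=C$ is itself a polynomial (of degree $0$), so $\PP$ contains arbitrarily large constants. For $\EXP$: interpreting $\EXP$ as the class of functions bounded above by some $2^{n^k}$ (or equivalently, closed under $\OO$), every constant function is trivially in $\EXP$ (take $k=0$), so arbitrarily large constants lie in $\EXP$. For $\OO(T(n))$ with $T(n)=\OOm(1)$: by definition there exist $k>0$ and $n_0\in\NN$ with $T(n)\geq k$ for all $n\geq n_0$. Thus for any constant $C$ the function $n\mapsto C$ satisfies $C\leq (C/k)\cdot T(n)$ for all $n\geq n_0$, so $n\mapsto C$ is in $\OO(T(n))$. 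Hence $\OO(T(n))$ also contains arbitrarily large constants.

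With these three observations in hand, I would conclude by invoking Theorem~\ref{multiO} in each of the three cases. I do not expect any serious obstacle here: the work is conceptual verification of the hypothesis of the previous theorem, and the only subtlety is to fix a sensible interpretation of $\EXP$ (the paper's definition "class of exponential functions" should be read in the $\OO$-closed sense that includes constants and polynomials, as is standard in complexity theory).
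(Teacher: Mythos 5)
Your proposal is correct and follows exactly the paper's own route: the paper's entire proof is the one-line observation that $\PP$, $\EXP$ and $\OO(T(n))$ for $T(n)=\OOm(1)$ contain all constants, after which Theorem~\ref{multiO} applies. You simply spell out that verification (including the mild interpretive point about $\EXP$) in more detail.
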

\begin{proof}
The classes $\PP$, $\EXP$ and $\OO(T(n))$ for $T(n)=\OOm(1)$ contain all constants. \qed
\end{proof}
\begin{corollary}
The problem $\halt{\OO(1)}$ is undecidable.
\end{corollary}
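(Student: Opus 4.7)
The plan is to obtain this as an immediate consequence of Theorem~\ref{multiO}. Unpacking the notation defined in the preliminaries, $\halt{\OO(1)}$ is $\halt{\FF}$ for the class $\FF = \{f:\NN\rightarrow\RPZ;\ f(n) = \OO(1)\}$, i.e.\ the class of all eventually bounded functions. Every constant function $f(n) \equiv C$ trivially satisfies $f(n) = \OO(1)$, so $\FF$ contains constants of arbitrary size. Therefore the hypothesis of Theorem~\ref{multiO} is met and $\halt{\OO(1)}$ is undecidable.

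This is essentially already subsumed by the previous corollary (take $T(n) = 1$, which satisfies $T(n) = \OOm(1)$), but stating it separately highlights the natural reading: there is no algorithm that decides, given a multi-tape Turing machine $M$, whether $M$ runs in constant time. I expect no obstacle; the only thing to verify is that $\OO(1)$ really does contain arbitrarily large constants, which is immediate from the definition of big-$\OO$ given in the preliminaries.

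\begin{proof}
The class $\OO(1) = \{f:\NN\rightarrow\RPZ;\ f(n) = \OO(1)\}$ contains every constant function, and in particular contains arbitrarily large constants. By Theorem~\ref{multiO}, the problem $\halt{\OO(1)}$ is undecidable.\qed
\end{proof}
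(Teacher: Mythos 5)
Your proof is correct and matches the paper's (implicit) argument exactly: the paper states this corollary without proof as an immediate instance of Theorem~\ref{multiO}, since the class $\OO(1)$ contains all constant functions and hence arbitrarily large constants. Your additional remark that it also follows from the preceding corollary with $T(n)=1$ is accurate.
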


To rephrase the corollary, there is no algorithm that would tell us which multi-tape Turing machines run in constant time and which do not. The catch is of course in big $\OO$ notation, which hides the constant. 

\section{One-tape machines}
	\label{one}

This section includes the results about decidability of problems $\halt{\FF}^1$ and $\halt{T(n)}^1$ for different $\FF$ and $T$. We will see that these results are more complex than their multi-tape analogs which arises from the  ability of multi-tape Turing machines to efficiently count steps while simulating another Turing machine.

\subsection{Negative results}

Let us start with a technical lemma that will ease reductions of the problem $\halto$ to the problems concerning time complexity. It's multi-tape analog is Lemma~\ref{logLema2}.
\begin{lemma}
	\label{logLema}
Let $T:\NN\rightarrow\RPZ$ be a function such that $T(n)=\OOm(n\log n)$ and, for all $n\in\NN$, it holds $T(n)\geq n+1$. Then there exists an algorithm that takes as input a one-tape Turing machine $H$ and returns a one-tape Turing machine $\tilde{H}$ such that
$$H(\epsilon)=\infty\textrm{ \textbf{iff} }\tilde{H}\textrm{ runs in time }T(n)\textrm{ \textbf{iff} }\tilde{H}\textrm{ always halts.}$$
\end{lemma}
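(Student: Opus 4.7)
The plan is to adapt the multi-tape construction of Lemma~\ref{logLema2} to the one-tape setting. I would build $\tilde{H}$ so that on input $w$ of length $n$ it simulates $H$ on $\epsilon$ for some $s(n)$ steps with $s(n)\to\infty$ as $n\to\infty$; it diverts to an infinite loop if a halting state of $H$ is ever reached during the simulation, and halts otherwise. The claimed equivalences follow automatically from this structure: if $H(\epsilon)=\infty$ then the simulation never catches a halt, so $\tilde{H}$ always halts (and, as verified below, within the $T(n)$ budget); conversely, if $H$ halts on $\epsilon$ in $K$ steps then for every $n$ with $s(n)>K$ we have $\tilde{H}$ looping forever on every input of length $n$, so $\tilde{H}$ does not always halt.

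The crux is to carry out this simulation within the time budget $T(n)$ on a single tape. Fix $c>0$ and $n_0\in\NN$ with $T(n)\geq c n\log n$ for all $n\geq n_0$. For $n$ below some computable threshold $n_1\geq n_0$, of which there are only finitely many, I would hard-code $\tilde{H}$ to sweep the input until it hits the first blank and then halt, taking exactly $n+1\leq T(n)$ steps. For $n\geq n_1$, $\tilde{H}$ uses a product alphabet and proceeds in two phases. In Phase~1, a standard halving routine---repeatedly passing through the input and thinning a set of marks by a factor of two---places a unary clock of $s(n):=\lceil\log n\rceil$ marks in an $\OO(\log n)$-wide strip immediately to the right of the input, in $\OO(n\log n)$ steps. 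In Phase~2, $\tilde{H}$ simulates $H$ on $\epsilon$ step by step in a workspace placed adjacent to the strip, with $H$'s transition function hard-coded into $\tilde{H}$'s finite control so that each simulated step costs $\OO(1)$. After each simulated step $\tilde{H}$ returns to the strip to erase one clock mark, and since strip and workspace sit in a common $\OO(\log n)$-wide window each round trip is $\OO(\log n)$, giving a total of $\OO(\log^2 n)$ for Phase~2. If $H$ enters a halting state before the clock is exhausted $\tilde{H}$ diverts to an infinite loop; otherwise it halts.

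The total running time of $\tilde{H}$ is $\OO(n\log n)+\OO(\log^2 n)=\OO(n\log n)$. To push the leading constant below the given $c$ I would invoke the classical linear speed-up theorem for one-tape machines in superlinear time, enlarging the tape alphabet to compress work; this yields running time at most $c n\log n\leq T(n)$ for $n\geq n_1$, with $n_1$ chosen large enough to absorb the lower-order terms.

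The main obstacle is exactly the tightness of the time bound. A generic one-tape simulation of a multi-tape machine with $\OO(n)$ running time would incur a quadratic blow-up and overshoot the $\OOm(n\log n)$ envelope, so we cannot simply invoke Lemma~\ref{logLema2} as a black box. The workaround is to exploit how little the multi-tape construction actually needs---essentially just a step counter---so that a short logarithmic clock suffices and can be kept colocated with the simulation, keeping the per-step overhead logarithmic; linear speed-up then lets us meet the leading constant exactly.
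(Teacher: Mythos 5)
Your overall architecture (sweep-and-halt on short inputs, build a $\Theta(\log n)$-step clock, simulate $H$ on $\epsilon$ against that clock, loop iff $H$ halts within the budget) matches the paper's, and the logical equivalences are handled correctly. The gap is the final step, where you invoke the linear speed-up theorem to push the leading constant of your $\OO(n\log n)$ running time below the constant $c$ hidden in $T(n)=\OOm(n\log n)$. The classical speed-up theorem does not apply to one-tape machines at time bounds that are $\oo(n^2)$: its proof compresses $m$ tape cells into one, but the $n$ input cells arrive uncompressed, and relocating them into $n/m$ physical cells costs on the order of $n^2/m$ head moves on a single tape, while merely recoding them in place leaves them occupying $n$ physical cells, so every subsequent sweep still costs $n$ steps. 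Your Phase~1 is a base-$2$ thinning routine that makes about $\log_2 n$ full sweeps over the length-$n$ input region and hence costs at least roughly $n\log_2 n$ steps; no enlargement of the tape alphabet changes the physical distance the head must travel on each sweep. Since $c$ may be arbitrarily small, $n\log_2 n\leq T(n)$ can fail for all large $n$, and your machine overshoots the budget.

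The fix is not compression but reducing the \emph{number} of sweeps: the paper thins the marks by a factor of $C$ per sweep for a large constant $C$, chosen so that $T(n)\geq 3n\log_C n+6n+1$ for all large $n$, which makes the counting phase cost only about $n\log_C n=n\log n/\log C$ steps. (The paper also interleaves the simulation of $H$ into those same sweeps rather than running it in a separate $\OO(\log n)$-wide window, but your separate Phase~2 costing $\OO(\log^2 n)$ would have been fine too.) Replacing your halving by base-$C$ thinning and dropping the appeal to speed-up repairs the argument.
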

\begin{proof}
Because $T(n)=\OOm(n\log n)$, there exist constants $C,n_0\in\NN$ such that $6\leq C\leq n_0$ and, for all $n\geq n_0$, it holds $$T(n)\geq 3 n\log_C n+ 6n +1.$$

For an arbitrary one-tape Turing machine $H$ with tape alphabet $\Gamma$ and blank symbol $\blank$, let us describe a new one-tape Turing machine $\tilde{H}$:
\begin{itemize}
\item $\tilde{H}$ has the same input alphabet and blank symbol as $H$, but its tape alphabet is $\Gamma\bigcup\Gamma'\bigcup\{0,1,\#\}$, where $\Gamma'=\{a'; a\in\Gamma\}$.
Without loss of generality we can assume that sets $\{0,1,\#\}$, $\Gamma$ and $\Gamma'$ are pairwise disjoint.
\item On input $w$ of length $n$, $\tilde{H}$ first reads the input and if $n< n_0$, accepts in $(n+1)$ steps. If $n\geq n_0$, then  $\tilde{H}$ overwrites the input with $$\#1^{n-1}\#,$$
 leaving the head above the last written one. This all can be done in \textbf{$\mathbf{(n+1)}$ steps}.
\item $\tilde{H}$ will never again write or overwrite the symbol $\#$, which will serve as the left and right border for the head. From now on, the head will move exactly from the right $\#$ to the left $\#$ and vice versa. Thus we can only count how many times the head will pass from one $\#$ to another and multiply the result with $n$ to get how many steps were done. A transition of the head form one $\#$ to another will be called a \emph{(head) pass}.
\item For $m=\lceil  \log_C n\rceil$, $\tilde{H}$ can transform its tape into $$\#\blank^{m}\blank'\blank 0^{n-3-m}\#$$  in the next $\mathbf{(2m+2)}$\textbf{ head passes}\footnote{Note that $C^{m-1}\leq n-1<C^m$.}.

This can be done if on each pass to the right $\tilde{H}$ turns $(C-1)$ successive ones into zeros, leaves out the next symbol one, turns the next $(C-1)$ successive ones into zeros \ldots until it comes to $\#$. Also when passing to the right, it adds another blank symbol after the last written blank symbol. When passing to the left it changes nothing. When there are no more ones, it makes two additional passes to insert $\blank'\blank$ after blank symbols. 

 Until this point we did not need any information about $H$.
\item The tape is now prepared for simulation of $H$ on input $\epsilon$. The symbols from $\Gamma$ tell us how the tape of $H$ looks like and the (only) symbol from $\Gamma'$ tells us the current head position in $H$.
\item The simulation goes as follows: in each pass, $\tilde{H}$ turns $(C-1)$ successive zeros into blank symbols, leaves out the next symbol zero, turns next $(C-1)$ successive zeros into blank symbols \ldots and when the head comes to the ``simulation part'' of the tape, it simulates one step of $H$ iff the head of $H$ would move in the same direction as the head of $\tilde{H}$ is currently moving. Thus $\tilde{H}$ simulates at least one and at most two steps of $H$ in two head passes. 
\item If $\tilde{H}$ runs out of zeros on its tape before the simulation of $H$ has finished, it halts (e.g. goes in $\qa$). Else,  $\tilde{H}$ starts an infinite loop so that $\tilde{H}(w)=\infty$.

\item From $6\leq C\leq n_0\leq n$ it follows that $n-3-m\geq C^{m-2}$, so $\tilde{H}$ needs at least $(m-1)$ head passes to erase all zeros.

Thus if  $\tilde{H}$ halts, this means that $H$ does not complete its computation on input $\epsilon$ in $\lfloor\frac{m-1}{2}\rfloor$ steps. In this case $\tilde{H}$ makes at most $\mathbf{m}$\textbf{ head passes} from the beginning of the simulation until it halts and thus makes at most $T(n)$ steps altogether on input $w$.

If $\tilde{H}$ does not halt, this means that $H$ halts on input $\epsilon$.
\end{itemize}

Note that since $m=\OOm(\log n)\neq\OO(1)$ it holds 
$$H(\epsilon)=\infty\textrm{ \textbf{iff} }\tilde{H}\textrm{ runs in time }T(n)\textrm{ \textbf{iff} }\tilde{H}\textrm{ always halts.}$$

To sum up, we have described a desired one-tape Turing machine $\tilde{H}$ and it is clear from the description that there exists an algorithm that constructs it from $H$. 
\qed
\end{proof}

The next theorem with its corollaries is an analog of Theorem~\ref{multiO} for multi-tape Machines.

\begin{theorem}
	\label{oneO}
Let $\FF\subseteq\{f:\NN\rightarrow\RPZ\}$ be a class of functions that contains arbitrary large constants. Then the problem $\halt{\FF}^1$ is undecidable.
\end{theorem}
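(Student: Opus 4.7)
The plan is to reduce $\halto$ to $\halt{\FF}^1$ using only the constants in $\FF$, thereby contradicting Lemma~\ref{ulema}. Assume that some multi-tape machine $H_{\FF}$ decides $\halt{\FF}^1$; I will describe, for each one-tape Turing machine $H$, how to construct a one-tape Turing machine $\tilde{H}$ for which
$$\tilde{H}\in\halt{\FF}^1\iff H(\epsilon)\neq\infty.$$
Applying $H_{\FF}$ to $\tilde H$ then decides $\halto$, the desired contradiction.

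The construction is the one-tape analogue of the naive ``ignore the input and simulate $H$ on $\epsilon$'' idea. I would give $\tilde H$ the same state set and halting states as $H$, tape alphabet $\Gamma_H\cup\Sigma_{\tilde H}$ with $\Sigma_{\tilde H}$ a fresh input alphabet disjoint from $\Gamma_H$, and transition function $\delta_{\tilde H}(q,s)=\delta_H(q,\blank)$ whenever $s\in\Sigma_{\tilde H}$ and $\delta_{\tilde H}(q,s)=\delta_H(q,s)$ otherwise. Running $\tilde H$ on an arbitrary input $w$ then mirrors $H$ on $\epsilon$ step by step: whenever $\tilde H$'s head enters a cell still carrying an input symbol of $w$, that symbol is treated as $\blank$ and the same transition as in $H$'s computation on $\epsilon$ is executed. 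The key claim, proved by an easy induction on the step count, is that $\tilde H$ halts on every input in exactly the same number of steps $k$ in which $H$ halts on $\epsilon$, and $\tilde H$ runs forever on every input whenever $H(\epsilon)=\infty$.

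Granting this claim, the required equivalence is immediate. If $H(\epsilon)\neq\infty$ with halting time $k$, then $\tilde H$ runs in constant time $k$; because $\FF$ contains arbitrary large constants, some $c\geq k$ belongs to $\FF$, and hence $\tilde H\in\halt{c}^1\subseteq\halt{\FF}^1$. If $H(\epsilon)=\infty$, then $\tilde H$ runs forever on every input and so cannot satisfy any finite time bound $T:\NN\rightarrow\RPZ$, giving $\tilde H\notin\halt{\FF}^1$.

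The only point requiring care is the correctness of the simulation. I need to verify by induction on $t$ that at every step of $\tilde H$ the head reads either a symbol previously written during the simulation (in $\Gamma_H$, behaving exactly as in $H$'s computation on $\epsilon$) or an untouched cell holding an input symbol or $\blank$ (which the rule $\delta_{\tilde H}(q,s)=\delta_H(q,\blank)$ forces to behave as $\blank$ would for $H$). Once this invariant is in hand, a single construction proves the whole theorem, and in particular no case split according to whether $\FF$ contains a function $T$ with $T(n)\geq n+1$ for all $n$, such as in the proof of Theorem~\ref{multiO}, is required.
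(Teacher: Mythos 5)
Your proof is correct, and it takes a genuinely different and simpler route than the paper's. You use one uniform construction: $\tilde H$ treats input symbols as blanks and replays $H$ on $\epsilon$, so $\tilde H$ either halts in the same constant number of steps on every input (when $H(\epsilon)\neq\infty$) or diverges on every input (when $H(\epsilon)=\infty$); the first alternative places $\tilde H$ in $\halt{\FF}^1$ because $\FF$ contains arbitrarily large constants, and the second excludes $\tilde H$ from $\halt{T(n)}^1$ for \emph{every} function $T:\NN\rightarrow\RPZ$, so no case analysis on $\FF$ is needed. The paper instead splits on whether $F=\{T\in\FF:\ T(n)\geq n+1\textrm{ for all }n\}$ contains a function with $T(n)=\OOm(n\log n)$, using in one case a padding construction in which $\tilde H$ \emph{always halts} but makes $\OOm(n\log n)$ extra steps whenever $H$ has not yet halted, and in the other the technical simulation of Lemma~\ref{logLema}. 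What the longer route buys is robustness: the paper's case-1 ``no'' instances are total machines, which is what makes the argument adaptable to the variant problem flagged at the end of Section~\ref{prelim} where inputs are promised to always halt, and Lemma~\ref{logLema} is needed anyway for the sharper Theorem~\ref{neg1}; your reduction separates yes- from no-instances solely by whether $\tilde H$ terminates at all, so it would collapse under that promise. For the theorem exactly as stated, however, your argument is complete, the essential points being the step-exact correctness of the blanked-out simulation (which your invariant handles) and the observation that a machine diverging on some input satisfies no time bound valued in $\RPZ$.
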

\begin{proof}
The proof is by contradiction, so let a multi-tape Turing machine $H_{\FF}$ decide $\halt{\FF}^1$ for some class $\FF$ of functions that contains arbitrary large constants. Define the class  $F=\{T\in \FF,\ T(n)\geq n+1\textrm{ for all }n\}$ and consider two separate cases:
\begin{enumerate}
\item If for all functions $T\in F$, it holds $T(n)\neq\OOm(n\log n)$, then the following algorithm solves $\halto$:
\begin{itemize}
\item Given a one-tape Turing machine  $H$, construct a one-tape Turing machine $\tilde{H}$ that on input $w$
\begin{itemize}
\item simulates $|w|$ steps of ($H$ on input $\epsilon$).
\item If $H$ halts in less than $|w|$ steps, then $\tilde{H}$ also halts and does not make any additional steps. 
\item Else, $\tilde{H}$ makes at least additional $|w|\log |w|$ arbitrary steps and halts. 
\end{itemize}

This can easily be done, for example, by using the input portion of the  tape only for counting steps and the left portion of the tape for simulation of ($H$ on input $\epsilon$). We do not need $\tilde{H}$ to efficiently simulate $H$, but it is necessary that $\tilde{H}$ runs in constant time if $H$ halts on $\epsilon$.
\item Return $H_\FF(\tilde{H})$.
\end{itemize}

It is clear from the algorithm that $H$ halts on input $\epsilon$ \textbf{iff} $\tilde{H}$ runs in constant time \textbf{iff} $\tilde{H}$ runs in time $\tilde{T}(n)$ for some $\tilde{T}(n)\neq\OOm(n\log n)$. 

Now if $H_\FF(\tilde{H})=1$, then there exists a function $T\in \FF$, such that $\tilde{H}$ runs in time $T(n)$. If $T\in F$, then $T(n)\neq\OOm(n\log n)$, thus $H$ halts on input $\epsilon$. If $T\not\in F$, then by the definition of $F$ there exists $n_0$ such that $T(n_0)< n_0+1$, which implies that $\tilde{H}$ never reads the $(n_0+1)$-st bit of any input (see the proof of Lemma~\ref{pozMulti}). It follows that $\tilde{H}$ runs in constant time and consequentially $H$ halts on input $\epsilon$.

 If $H_\FF(\tilde{H})=0$, then $\tilde{H}$ does not run in constant time because $\FF$ contains arbitrary large constants. So $H$ does not halt on input $\epsilon$.

It follows that the above algorithm decides $\halto$, which is a contradiction.
\item If for some function $T\in F$ holds $T(n)=\OOm(n \log n)$, then we can (again) solve $\halto$:

For an arbitrary one-tape Turing machine $H$, use Lemma~\ref{logLema} to construct a one-tape Turing machine $\tilde{H}$ which runs in time $T(n)$ \textbf{iff} $\tilde{H}$ always halts \textbf{iff} $H(\epsilon)=\infty$.
Now $\left(1-H_{\FF}(\tilde{H})\right)$ tells whether  $H$ halts on input $\epsilon$.

Since $\halto$ is undecidable, we have a contradiction.

\end{enumerate}

To sum up, $\halt{\FF}^1$ is undecidable.\qed
\end{proof}

\begin{corollary}
	\label{cor1}
The problems  $\halt{\PP}^1$, $\halt{\EXP}^1$ and $\halt{\OO(T(n))}^1$ for $T(n)=\OOm(1)$ are undecidable.
\end{corollary}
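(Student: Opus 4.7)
The plan is to obtain this corollary as an immediate application of Theorem~\ref{oneO}. That theorem already does all the heavy lifting: as soon as a class $\FF$ of functions contains arbitrarily large constants, $\halt{\FF}^1$ is undecidable. So the only thing left to do is to check, one by one, that the three classes in the statement meet this criterion.

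For $\PP$, every constant function $T(n)=C$ is a polynomial (of degree $0$), so $\PP$ trivially contains every constant; in particular it contains arbitrarily large ones. For $\EXP$, constants are dominated by any exponential function, so each constant $T(n)=C$ sits in $\EXP$, and again arbitrarily large constants are present. For $\OO(T(n))$ with $T(n)=\OOm(1)$, by definition of $\OOm(1)$ there exist $k>0$ and $n_0\in\NN$ such that $T(n)\geq k$ for all $n\geq n_0$; then for any constant $C\in\NN$ the function $n\mapsto C$ satisfies $C\leq\frac{C}{k}\cdot T(n)$ for $n\geq n_0$, so $C\in\OO(T(n))$. Hence this class also contains arbitrarily large constants.

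Having verified the hypothesis of Theorem~\ref{oneO} in each of the three cases, we conclude that $\halt{\PP}^1$, $\halt{\EXP}^1$ and $\halt{\OO(T(n))}^1$ (for $T(n)=\OOm(1)$) are all undecidable. There is no real obstacle here; the substance is entirely in Theorem~\ref{oneO}, and this corollary simply packages three natural instantiations of its hypothesis. The parallel structure mirrors the multi-tape corollary just after Theorem~\ref{multiO}, and the proof can be kept to a single sentence pointing out that each of the three classes contains all constant functions.
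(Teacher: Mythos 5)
Your proposal is correct and follows exactly the paper's own argument: the paper's proof of this corollary is the single sentence that $\PP$, $\EXP$ and $\OO(T(n))$ for $T(n)=\OOm(1)$ contain all constants, after which Theorem~\ref{oneO} applies. Your additional verification that each class indeed contains arbitrarily large constants is just a spelled-out version of that same observation.
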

\begin{proof}
The classes $\PP$, $\EXP$ and $\OO(T(n))$ for $T(n)=\OOm(1)$ contain all constants. \qed
\end{proof}
\begin{corollary}
	\label{cor2}
The problem $\halt{\OO(1)}^1$ is undecidable.
\end{corollary}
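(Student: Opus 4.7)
The plan is to observe that Corollary~\ref{cor2} is an immediate special case of Theorem~\ref{oneO}. The class $\OO(1)$ consists of all functions $f:\NN\rightarrow\RPZ$ that are bounded above by some constant; in particular, for every $K\in\NN$, the constant function $f(n)=K$ belongs to $\OO(1)$. Thus $\OO(1)$ contains arbitrarily large constants, which is precisely the hypothesis of Theorem~\ref{oneO}.

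Therefore I would simply invoke Theorem~\ref{oneO} with $\FF=\OO(1)$ to conclude that $\halt{\OO(1)}^1$ is undecidable. Alternatively, one can view this as the special case of Corollary~\ref{cor1} obtained by taking $T(n)=1$, since $1=\OOm(1)$ and $\OO(T(n))=\OO(1)$ in that case.

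There is no real obstacle here: all of the work has already been absorbed into Theorem~\ref{oneO} and Lemma~\ref{logLema}. The only thing worth remarking, to make the corollary self-contained for the reader, is why one cannot sidestep the hypothesis by noting that $\OO(1)$ contains only slowly-growing functions; the answer is that the negative result in Theorem~\ref{oneO} is driven entirely by the presence of arbitrary large constants in $\FF$ (handled via the first case of that theorem's proof, which reduces $\halto$ using a one-tape machine that runs in constant time iff $H$ halts on $\epsilon$). So the two-line proof suffices.
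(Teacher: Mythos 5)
Your proof is correct and is exactly the argument the paper intends: the class $\OO(1)$ contains every constant function, hence arbitrarily large constants, so Theorem~\ref{oneO} applies directly (just as in the proof of Corollary~\ref{cor1}). Nothing further is needed.
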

Now let us look at decidability of problems $\halt{T(n)}^1$, where $T$ is an explicit function. While Theorem~\ref{iff} was enough for the multi-tape case, we need Theorem~\ref{neg1}, Proposition~\ref{poz1} and Theorem~\ref{main} to answer questions about decidability of $\halt{T(n)}^1$.

\begin{theorem}
	\label{neg1}
Let $T:\NN\rightarrow\RPZ$ be a function such that $T(n)=\OOm(n\log n)$ and, for all $n\in\NN$, it holds $T(n)\geq n+1$. Then the problem $\halt{T(n)}^1$ is undecidable.
\end{theorem}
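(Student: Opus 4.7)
The plan is to carry out the obvious reduction from $\halto$, mirroring the proof of Theorem~\ref{iff}, and leverage the one-tape analog Lemma~\ref{logLema} that has been prepared exactly for this purpose. First I would assume, for contradiction, that $\halt{T(n)}^1$ is decidable, so that there is an algorithm $H_{T}$ taking a one-tape Turing machine and returning $1$ iff it runs in time $T(n)$.

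Next, given an arbitrary one-tape Turing machine $H$, I would invoke Lemma~\ref{logLema} (whose hypotheses $T(n) \geq n+1$ and $T(n) = \OOm(n \log n)$ are exactly the hypotheses of this theorem) to construct a one-tape Turing machine $\tilde{H}$ satisfying
\[
H(\epsilon) = \infty \text{ \textbf{iff} } \tilde{H} \text{ runs in time } T(n) \text{ \textbf{iff} } \tilde{H} \text{ always halts.}
\]
Then the value $H_T(\tilde{H})$ decides whether $H$ halts on $\epsilon$, contradicting Lemma~\ref{ulema}. Therefore $\halt{T(n)}^1$ is undecidable.

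There is no serious obstacle here beyond invoking Lemma~\ref{logLema}; all the genuine technical content of the theorem lives in that lemma, which simulates $H$ on $\epsilon$ inside a one-tape machine that is forced to waste $\OOm(n \log n)$ steps per input of length $n$ by using a base-$C$ ``odometer'' encoded with the symbols $\{0, 1, \#\}$ and auxiliary alphabet $\Gamma'$ between two stationary $\#$-markers. The only thing to check at the level of this theorem is that the reduction produces the machine $\tilde{H}$ effectively from $H$, which Lemma~\ref{logLema} explicitly guarantees.
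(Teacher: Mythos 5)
Your proof is correct and follows essentially the same route as the paper: a reduction from $\halto$ via Lemma~\ref{logLema}, with the decider for $\halt{T(n)}^1$ applied to $\tilde{H}$ (note only that the answer is the negation $1-H_T(\tilde{H})$, since $\tilde{H}$ runs in time $T(n)$ exactly when $H$ does \emph{not} halt on $\epsilon$).
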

\begin{proof}
For an arbitrary one-tape Turing machine $H$, let $\tilde{H}$ be a one-tape Turing machine that runs in time $T(n)$ iff $H(\epsilon)=\infty$. By Lemma~\ref{logLema} we can construct  it from $H$.

Now if the problem $\halt{T(n)}^1$ was decidable, we could decide whether  $\tilde{H}$ runs in time $T(n)$ and thus also whether $H$ halts on input $\epsilon$. This is a contradiction since $\halto$ is undecidable.\qed
\end{proof}

\subsection{Positive results}

The next proposition shows that the condition $T(n)\geq n+1$ in Theorem~\ref{neg1} is inevitable. The proof is just the same as in the multi-tape case (Lemma~\ref{pozMulti}) an will be left out.

\begin{proposition}
	\label{poz1}
Let $T:\NN\rightarrow\RPZ$ be a function such that, for some $n_0\in\NN$, it holds $T(n_0)< n_0+1$. Then the problem $\halt{T(n)}^1$ is decidable.
\end{proposition}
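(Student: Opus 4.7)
The plan is to replay the algorithm and correctness argument of Lemma~\ref{pozMulti} in the one-tape setting. Given $M$ and the constant $n_0$ from the hypothesis (so $M$ is allowed at most $\lfloor T(n_0)\rfloor\leq n_0$ steps on inputs of length $n_0$), I would first simulate $M$ on every input of length $\leq n_0$---a finite collection---and return 0 if some such run exceeds the corresponding bound $T(n)$. Otherwise let $T_w\in\{0,1,\ldots,n_0\}$ be the maximum number of steps $M$ uses on any input of length exactly $n_0$, and return 1 iff the predicate ``$T_w\leq T(n)$ for every $n>n_0$'' holds. Since $T$ is a fixed parameter of the problem and $T_w$ has only $n_0+1$ possible values, the answer for each candidate value of $T_w$ is a fixed bit that can be hard-coded into the algorithm as a lookup table.

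Correctness reduces to the structural claim that if $M$ halts in at most $T_w\leq n_0$ steps on every input of length $n_0$, then $M$ halts in at most $T_w$ steps on every longer input as well. In the one-tape setting this relies on the preliminaries' stipulation that the head moves at every step: starting at the first input cell and taking at most $n_0$ moves, the head reads only cells at positions at most $n_0$, all of which lie within the input region of $w$. Hence, for any extension $wv$ with $|w|=n_0$, the initial contents of every cell the head ever reads coincide between input $w$ and input $wv$; a straightforward induction on the step number then shows that $M$ performs an identical sequence of transitions---including writes---on $wv$ as on $w$, and therefore halts after exactly the same number $\leq T_w$ of steps.

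The only divergence from the multi-tape proof of Lemma~\ref{pozMulti} is that here the head may overwrite the cells it visits rather than just reading from a separate, read-only input tape. I expect this to be the main (and very mild) point to check, but it is not a real obstacle: since both computations begin with matching contents on every cell the head ever accesses, each write on $wv$ coincides with the corresponding write on $w$, so the subsequent reads return identical symbols and the induction goes through unchanged. With this small bookkeeping adjustment the argument of Lemma~\ref{pozMulti} transfers verbatim.
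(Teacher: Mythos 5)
Your proposal is correct and takes essentially the same approach as the paper, which simply states that the proof is identical to that of Lemma~\ref{pozMulti} and omits it. You in fact supply exactly the detail the paper leaves implicit---that since the head starts at cell $0$ and moves one cell per step, a computation of at most $T(n_0)\leq n_0$ steps only ever reads cells whose initial contents are unaffected by extending the input beyond length $n_0$, so the read--write nature of the single tape causes no difficulty---and your handling of the final test via finitely many hard-coded bits matches the paper's reasoning.
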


%

Until this point we know that, for $T(n)=\OOm(n\log n)$, we can solve $\halt{T(n)}^1$ iff, for some $n_0\in\NN$, it holds $T(n_0)< n_0+1$. It remains to see that $\halt{T(n)}^1$ is decidable for all nice functions $T(n)=\oo(n\log n)$. 
\begin{theorem}
	\label{main}
For any manageable function $T:\NN\rightarrow\RR^+$, for which the function $\frac{n\log n}{T(n)}$ computably converges to $\infty$, the problem $\halt{T(n)}^1$ is decidable.
\end{theorem}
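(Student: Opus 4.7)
The plan is to combine Kobayashi's crossing-sequence analysis with the manageability hypothesis on $T$. Since $\frac{n\log n}{T(n)}$ computably converges to $\infty$, in particular $T(n)=\oo(n\log n)$, so by the Hartmanis/Kobayashi theory any machine $M$ that actually runs in time $T(n)$ produces only crossing sequences of bounded length. The strategy is therefore: (i) compute an effective bound $B$ on crossing sequence lengths from $M$ and from the convergence witness; (ii) reject $M$ if it ever exceeds $B$; (iii) otherwise, build a finite weighted automaton that encodes $M$'s running time as a function of the input; and (iv) use manageability of $T$ to verify the time bound on all inputs.

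For step (i) I would invoke the constructive version of Kobayashi's theorem, which to a one-tape Turing machine $M$ assigns a computable function $\ell\mapsto n_M(\ell)$ such that whenever $M$ uses a crossing sequence of length $\ell$ on some input, $M$ makes at least on the order of $n_M(\ell)\log n_M(\ell)$ steps on some input. Combining this with the computable witness that $\frac{n\log n}{T(n)}\to\infty$ yields a computable $B$ with the property: if $M$ ever produces a crossing sequence of length greater than $B$, then $M$ cannot run in time $T(n)$. For step (ii), whether $M$ produces such an overlong crossing sequence is decidable by reachability in a standard finite product construction whose states are pairs $(q,s)$ with $s$ a partial crossing sequence of length at most $B+1$; if such a state is reachable, return $0$.

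For step (iii), once the crossing sequences of $M$ are guaranteed bounded by $B$, I would build a deterministic weighted automaton $A_M$ whose states are the crossing sequences of length $\leq B$, whose transitions $c\xrightarrow{a}c'$ encode the local action of $M$ (determined by simulating $M$ on a single cell containing $a$, given the incoming left-crossing sequence $c$), and with transition weights chosen so that, on an input $w$ of length $n$, the weight of the path traced by $w$ equals the sum of crossing-sequence lengths at the $n+1$ inner boundaries. The time spent on the infinite blank regions flanking the input depends only on the leftmost and rightmost crossing sequences and is either infinite, in which case $M$ fails to halt on $w$, or computable and absorbable into the start/end states of $A_M$. For step (iv), the set of pairs $(\text{length},\text{weight})$ realised by paths in $A_M$ from the initial state is a computable semi-linear subset of $\NN\times\NN$, i.e.\ a finite union of linear sets $\{(L_0+\sum_i x_iL_i,\,W_0+\sum_i x_iW_i):x_i\in\NN\}$, obtainable from a regular expression describing the paths of $A_M$. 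Verifying that $M$ runs in time $T(n)$ is then equivalent to asking, for each such linear set, whether there exist $x_1,\ldots,x_k\in\NN$ with $T(L_0+\sum x_iL_i)<W_0+\sum x_iW_i$, which, after a small case analysis for the degenerate subcases $L_0=0$ or $L_i=0$, is precisely the manageability query for $T$.

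The main obstacle I expect is making step (iii) rigorous: the weighted automaton must faithfully account for $M$'s full running time, including the steps taken on the infinite blank regions bordering the input and the detection of non-halting computations as failures of the time bound. Once that infrastructure is in place, step (iv) follows from semi-linearity of path Parikh images in a finite automaton combined with the manageability template, and the effective extraction of $B$ in step (i) is the standard procedure of sharpening a convergence statement into a computable bound.
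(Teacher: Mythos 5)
Your high-level strategy---extract an effective crossing-sequence bound from the computable convergence of $\frac{n\log n}{T(n)}$ (constructive Kobayashi), reduce the running time on an arbitrary input to a sum of crossing-sequence lengths, obtain a semi-linear description of the realizable (input length, running time) pairs, and discharge the resulting family of inequalities with the manageability oracle---is exactly the paper's strategy; your steps (i) and (iv) correspond closely to Lemma~\ref{construct} and to the final, manageability-based step of the paper's algorithm. The gap is in the middle layer. Step (iii) is wrong as stated: the crossing sequence at boundary $i+1$ is \emph{not} a function of the crossing sequence at boundary $i$ and the symbol in cell $i$, because its even-indexed entries (the states in which the head returns from the right) depend on the entire tape to the right of boundary $i+1$. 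For a machine that scans right to the first blank and returns in a state determined by the last symbol read, the crossing sequence at boundary $1$ already depends on the whole input, while the crossing sequence at boundary $0$ and the content of cell $0$ do not. So no \emph{deterministic} left-to-right automaton of the kind you describe exists. The standard repair is a Hennie-style nondeterministic automaton that guesses crossing sequences and checks local consistency, but then the soundness of step (iv) hinges on a realizability lemma: the $(\textrm{length},\textrm{weight})$ pairs of accepted paths must coincide \emph{exactly} with the $(\textrm{input length},\textrm{running time})$ pairs of $M$, and in particular no locally consistent but globally unrealizable assignment may cause a false rejection. That lemma is the nontrivial content you have deferred. Step (ii) has a similar problem: ``reachability over pairs $(q,s)$'' is not a finite-graph reachability question, since a crossing sequence at a fixed boundary is assembled over the whole computation, interleaved with unbounded activity on an infinite tape.

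The paper sidesteps both difficulties by decomposing words rather than stepping boundary by boundary: it computes a finite set $X$ of ``core'' inputs whose input-region crossing sequences are pairwise distinct, and for each produced crossing sequence $s$ a finite set $Y_s$ of ``primitive compatible'' words that may be spliced in at any boundary carrying $s$; the splicing is justified by Proposition~\ref{hennie1} and Corollary~\ref{hennie2}, all simulations are of bounded length, a regular-language argument (Lemmas~\ref{regular} and~\ref{implementation}) certifies that every input decomposes this way, and the running time on a composed input is exactly $T_x+\sum k_{s,y}T_{s,y}$, which feeds directly into the manageability query. To complete your proof you would need to supply the realizability lemma for locally consistent crossing-sequence assignments of a deterministic machine (including the treatment of the blank flanks and the detection of non-halting runs), which is work of essentially the same magnitude as the paper's decomposition argument.
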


Some examples of manageable functions were given in Proposition~\ref{examples}. Note that Theorem~\ref{main} tells us that we can algorithmically verify not only whether a one-tape Turing machine runs in time $Cn+D$ for constants $C,D\in\NN$, but also whether it runs in time like $(n+1) \sqrt{\log (n+2)}$. The proof of this theorem will be given after we introduce the notion of crossing sequences, which will be the main tool in the proof.

\paragraph{Crossing sequences.}
For a one-tape Turing machine $M$, we can number the cells of its tape with integers so that the cell 0 is the one where $M$ starts its computation. Using this numbering we can number the boundaries between cells as shown on Figure~\ref{tape}. Whenever we say that an input is written on the tape, we mean that its $i$-th bit is in cell $(i-1)$ and all other cells contain the blank symbol $\blank$.

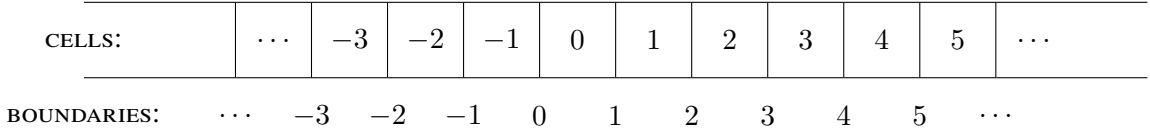
\begin{figure}[!htb]
	\begin{center}
	\begin{tikzpicture}
	\tikzstyle{prazno}=[inner sep=0pt, minimum size=0em]
		\node at ( -6,0) (c-50) [prazno] {};
		\node at ( -4,0) (c-40) [prazno] {};
		\node at ( -3,0) (c-30) [prazno] {};
		\node at (-2,0) (c-20)[prazno]  {};
		\node at ( -1,0) (c-10)[prazno]  {};
		\node at ( 0,0) (c00) [prazno] {};
		\node at ( 1,0) (c10) [prazno] {};
		\node at ( 2,0) (c20)  [prazno] {};
		\node at ( 3,0) (c30)[prazno]   {};
		\node at (4,0) (c40)[prazno]   {};
		\node at ( 5,0) (c50) [prazno]  {};
		\node at (6,0) (c60) [prazno]  {};
		\node at ( 8,0) [prazno]   {}
			edge [-] (c-50);
		\node at ( -6,-0.5) [prazno] {\sc{boundaries}:};
		\node at ( -4,-0.5)  [prazno] {$\ldots$ };
		\node at ( -3,-0.5)  [prazno] {$-3$};
		\node at (-2,-0.5) [prazno]   {$-2$};
		\node at ( -1,-0.5)[prazno]   {$-1$};
		\node at ( 0,-0.5)  [prazno] {$0$};
		\node at ( 1,-0.5) [prazno]  {$1$};
		\node at ( 2,-0.5)   [prazno] {$2$};
		\node at ( 3,-0.5) [prazno]   {$3$};
		\node at (4,-0.5)[prazno]   {$4$};
		\node at ( 5,-0.5)  [prazno]  {$5$};
		\node at (6,-0.5)  [prazno]  {$\ldots$};
		\node at ( 8,-0.5) [prazno]   {};
		\node at ( -6,1) (c-51) [prazno]  {};
		\node at ( -4,1) (c-41) [prazno]  {}
			edge [-] (c-40);
		\node at ( -3,1) (c-31) [prazno]  {}
			edge [-] (c-30);
		\node at (-2,1) (c-21)[prazno]   {}
			edge [-] (c-20);
		\node at ( -1,1) (c-11)  [prazno] {}
			edge [-] (c-10);
		\node at ( 0,1) (c01) [prazno] {}
			edge [-] (c00);
		\node at ( 1,1) (c11) [prazno]  {}
			edge [-] (c10);
		\node at ( 2,1) (c21) [prazno]  {}
			edge [-] (c20);
		\node at ( 3,1) (c31) [prazno]  {}
			edge [-] (c30);
		\node at (4,1) (c41)[prazno]  {}
			edge [-] (c40);
		\node at ( 5,1) (c51) [prazno]  {}
			edge [-] (c50);
		\node at (6,1) (c61) [prazno]  {}
			edge [-] (c60);
		\node at ( 8,1)  [prazno]  {}
			edge [-] (c-51);
		\node at ( -6,0.5) [prazno] {\sc{cells}:};
		\node at ( -3.5,0.5)  [prazno] {$\ldots$ };
		\node at ( -2.5,0.5)  [prazno] {$-3$};
		\node at (-1.5,0.5) [prazno]   {$-2$};
		\node at ( -0.5,0.5)[prazno]   {$-1$};
		\node at ( 0.5,0.5)  [prazno] {$0$};
		\node at ( 1.5,0.5) [prazno]  {$1$};
		\node at ( 2.5,0.5)   [prazno] {$2$};
		\node at ( 3.5,0.5) [prazno]   {$3$};
		\node at (4.5,0.5)[prazno]   {$4$};
		\node at ( 5.5,0.5)  [prazno]  {$5$};
		\node at (6.5,0.5)  [prazno]  {$\ldots$};
		\node at ( 8,0.5) [prazno]   {};
	\end{tikzpicture}
	\end{center}
	\caption{Numbering of tape cells and boundaries of a one-tape Turing machine.}
	\label{tape}
\end{figure}

Let $\tau$ be the tape of $M$ with some symbols written on it. We can cut the tape on finitely many boundaries to get \emph{tape segments} $\tup{\tau_1}{\tau_2}{\tau_k}$ so that $\tau=\tau_1\tau_2\cdots\tau_k$, where $\tau_1$ is left infinite, $\tau_k$ is right infinite and the other segments are finite. We can also start with a tuple of segments $\tup{\tilde{\tau}_1}{\tilde{\tau}_2}{\tilde{\tau}_l}$ and glue them together to get a tape 
$\tilde{\tau}=\tilde{\tau}_1\tilde{\tau}_2\cdots\tilde{\tau}_l$. This can be done if $\tilde{\tau}_1$ is left infinite, $\tilde{\tau}_l$ is right infinite, other segments are finite and exactly one of the segments has a prescribed location for the cell 0 (where $M$ starts its computation). If we run $M$ on $\tilde{\tau}$, we have three possible results of the computation:
\begin{itemize}
\item $M$ stops in state $\qa$. We say that $M$ accepts the tape $\tilde{\tau}$ and write $M(\tilde{\tau})=1$,
\item $M$ stops in state $\qr$. We say that $M$ rejects the tape $\tilde{\tau}$ and write $M(\tilde{\tau})=0$,
\item $M$ does not halt. We say that $M$ runs forever and write $M(\tilde{\tau})=\infty$.
\end{itemize}


Suppose that $M$ crosses the $i$-th boundary of tape $\tau$ at steps $\seq{t_1}{t_2}$ (this sequence can be finite or infinite). If $M$ was in state $q_j$ after making the step $t_j$, for all $j$, then we say that $M$ produces the \emph{crossing sequence} $C=\seq{q_1}{q_2}$ on the $i$-th boundary of $\tau$ and we denote its length by $|C|\in\NN\bigcup\{\infty\}$. Note that this sequence contains all the information the machine carries across the $i$-th boundary of the tape, thus the next proposition proven by Hennie~\cite{hennie} is very intuitive.
\begin{proposition}
	\label{hennie1}
Let $\tau_1\tau_2$ and $\tilde{\tau}_1\tilde{\tau}_2$  be two tapes of a one-tape Turing machine $M$ such that $M(\tau_1\tau_2)=M(\tilde{\tau}_1\tilde{\tau}_2)$. Suppose that segments $\tau_1$ and $\tau_2$ are joined at boundary $i>0$ and segments $\tilde{\tau}_1$ and $\tilde{\tau}_2$ are joined at boundary $j>0$. If $M$ on tape $\tau_1\tau_2$ on boundary $i$ produces the same crossing sequence as on tape $\tilde{\tau}_1\tilde{\tau}_2$ on boundary $j$, then
\begin{enumerate}[ a)]
\item $M(\tau_1\tau_2)=M(\tilde{\tau}_1\tilde{\tau}_2)=M(\tau_1\tilde{\tau}_2)=M(\tilde{\tau}_1\tau_2)$,
\item the crossing sequences generated on tapes $\tau_1\tau_2$ and $\tau_1\tilde{\tau}_2$ at corresponding boundaries of segment $\tau_1$ are identical,
\item the crossing sequences generated on tapes $\tau_1\tau_2$ and $\tilde{\tau}_1\tau_2$ at corresponding boundaries of $\tau_2$ are identical.
\end{enumerate}
\end{proposition}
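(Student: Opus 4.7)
The plan is to formalize the intuition that a crossing sequence captures all information that flows across a boundary, so each side's computation is determined by that side's tape contents together with the crossing sequence, and we can therefore ``splice'' the left half of one computation with the right half of another.

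The first step is to decompose the computation of $M$ on a split tape $\sigma_1\sigma_2$ (joined at a boundary $b>0$) into an interleaving of a left-side and a right-side sub-computation. Since $M$ moves its head at every step, successive crossings of $b$ alternate direction and the head stays on one side of $b$ between consecutive crossings. Thus the left-side sub-computation consists of the maximal runs during which the head is on the $\sigma_1$ side, glued together via the even-indexed re-entry states $q_2,q_4,\ldots$ of the crossing sequence $C=q_1,q_2,\ldots\,$; by determinism it is entirely determined by $\sigma_1$ and $C$. The right-side sub-computation admits a symmetric description involving $\sigma_2$, $C$, and the odd-indexed entries.

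Applying this observation to the given tapes, the left-side sub-computation of $M$ on $\tau_1\tau_2$ is determined by $\tau_1$ and $C$, while the right-side sub-computation of $M$ on $\tilde{\tau}_1\tilde{\tau}_2$ is determined by $\tilde{\tau}_2$ and $C$. Gluing these two consistent halves yields a bona fide computation of $M$ on $\tau_1\tilde{\tau}_2$ whose crossing sequence at the join is again $C$ and whose crossing sequences at every other boundary of $\tau_1$ (resp.\ $\tilde{\tau}_2$) coincide with those on $\tau_1\tau_2$ (resp.\ $\tilde{\tau}_1\tilde{\tau}_2$); the analogous splice handles $\tilde{\tau}_1\tau_2$. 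This immediately gives (b) and (c). For (a) I would case split on $C$: if $C$ is infinite, none of the four computations halts, so all four values are $\infty$; if $C$ is finite and ends with a left-to-right crossing, then $M$ halts on the right side of the join in each of the four tapes, the outcome is read off from the right-side sub-computation, and it coincides with either $M(\tau_1\tau_2)$ or $M(\tilde{\tau}_1\tilde{\tau}_2)$, which are equal by hypothesis. The right-to-left case is symmetric and uses the left-side sub-computations.

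The main technical difficulty will be the bookkeeping of positions and of the location of cell $0$. When gluing $\tau_1$ and $\tilde{\tau}_2$, the cells of $\tilde{\tau}_2$ must be re-indexed so that the head re-enters at the cell immediately to the right of the join and in the correct state; and the hypothesis $i,j>0$ guarantees that cell $0$ belongs to the left segment in all four tapes, making the initial configuration for each spliced computation unambiguous. Once this alignment is set up, the verification that the spliced run is a legitimate computation of $M$ reduces to a step-by-step induction on the two sub-computations, which I expect to be routine.
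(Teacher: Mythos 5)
Your proposal is correct: it is the standard crossing-sequence splicing argument, decomposing each computation into left-side and right-side sub-computations determined by the segment contents and the crossing sequence, and your handling of the $i,j>0$ alignment of cell $0$ and the case split on $C$ (infinite, finite ending left-to-right, finite ending right-to-left) is exactly what is needed. Note that the paper itself gives no proof of this proposition, deferring to Hennie's original argument, which is precisely the one you reconstruct; the only nitpick is that a finite crossing sequence does not force $M$ to halt (it may loop forever on one side), but your phrasing ``the outcome is read off from the right-side sub-computation'' already covers that case since both spliced and original computations then yield $\infty$.
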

Note that conditions $i>0$ and $j>0$  cause the cell 0 to be in left segments, which makes it possible to swap right segments. The same result also holds if we put $i,j\leq0$. The following corollary is now trivial.
\begin{corollary}
	\label{hennie2}
Let $\tau_1\tau_2\tau_3$ be  a tape of a one-tape Turing machine $M$. Suppose that segments $\tau_1$ and $\tau_2$ are joined at boundary $i>0$ and segments $\tau_2$ and $\tau_3$ are joined at boundary $j$. If $M$ on tape $\tau_1\tau_2\tau_3$ produces the same crossing sequences on boundaries $i$ and $j$, then $M(\tau_1\tau_2\tau_3)=M(\tau_1(\tau_2)^n\tau_3)$ for all $n\in\NN$.
\end{corollary}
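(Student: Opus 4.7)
The plan is to prove the corollary by induction on $n$, using a slightly strengthened inductive hypothesis so that Proposition~\ref{hennie1} can be invoked at each step. Let $C$ denote the common crossing sequence on $\tau_1\tau_2\tau_3$ at boundaries $i$ and $j$ (equal by hypothesis).

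First I would dispose of $n = 0$ directly by viewing $\tau_1\tau_2\tau_3$ under two different decompositions: as $\tau_1\cdot(\tau_2\tau_3)$ joined at boundary $i$, and as $(\tau_1\tau_2)\cdot\tau_3$ joined at boundary $j$. Both are literally the same tape (so the two machine outputs trivially agree) and the crossing sequences at the cut boundaries are both $C$, so Proposition~\ref{hennie1}(a) immediately yields $M(\tau_1\cdot\tau_3) = M(\tau_1\tau_2\tau_3)$.

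For $n \geq 1$ I would induct with the strengthened claim: (i) $M(\tau_1(\tau_2)^n\tau_3) = M(\tau_1\tau_2\tau_3)$, and (ii) the crossing sequence produced by $M$ on $\tau_1(\tau_2)^n\tau_3$ at the rightmost boundary between a copy of $\tau_2$ and $\tau_3$ (namely at position $i + n(j-i)$) is again $C$. The base $n = 1$ is immediate. For the step, I would apply Proposition~\ref{hennie1} to the pair of tapes $\tau_1\tau_2\tau_3$ (cut at boundary $i$) and $\tau_1(\tau_2)^n\tau_3$ (cut at boundary $i + n(j-i)$); its hypotheses are met because the two outputs agree by (i) and the two crossing sequences at the cut boundaries both equal $C$ (by the corollary's assumption and by (ii)). Part (a) of the proposition then gives $M(\tau_1(\tau_2)^n\cdot\tau_2\tau_3) = M(\tau_1\tau_2\tau_3)$, which is (i) at $n+1$; part (c), applied to the common right segment $\tau_2\tau_3$, transfers the crossing sequence at the internal $\tau_2$-$\tau_3$ boundary from $\tau_1\tau_2\tau_3$ to $\tau_1(\tau_2)^{n+1}\tau_3$, yielding (ii) at $n+1$.

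The main conceptual point is to choose the inductive hypothesis strong enough to reapply Proposition~\ref{hennie1} at each stage — without the crossing-sequence bookkeeping in (ii), the proposition's matching hypothesis would be lost after the first swap. I do not foresee further obstacles: the boundary-sign prerequisite of Proposition~\ref{hennie1} is automatic, since $i > 0$ and $j - i \geq 0$ force every boundary of the form $i + k(j-i)$ to be positive as well.
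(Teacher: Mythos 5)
Your proof is correct and follows the route the paper intends: the paper gives no explicit argument, declaring the corollary a trivial consequence of Proposition~\ref{hennie1}, and your induction with the strengthened hypothesis (tracking the crossing sequence at the boundary $i+n(j-i)$ via part~(c)) is exactly the bookkeeping needed to make that repeated application of the proposition rigorous. The observation that $j\geq i>0$ keeps all cut boundaries positive, so the proposition's sign condition is indeed automatic.
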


Note that condition $i>0$ could be replaced by $j\leq 0$. In other words, if the same crossing sequence appears on both ends of some tape segment that does not contain cell 0, then we can remove this segment or add extra copies of it next to each other without affecting the result of the computation.


The next lemma, implicit in Kobayashi~\cite{kobayashi}, is not that intuitive.
\begin{lemma}
	\label{kobi}
Let $T:\NN\rightarrow\RR^+$ be a function such that $T(n)=\oo(n \log n)$ and let
$$g(n)=\left\{
	\begin{array}{lcl}
		\frac{n \log n}{T(n)}&;& n\geq 2 \\
		1 &;& n=0,1.
	\end{array}
\right.$$
Then, for any integer $q\geq 2$, there exists a constant $c$ such that any one-tape Turing machine with $q$ states that runs in time $T(n)$, on each input produces only crossing sequences of lengths bounded by $c$. What is more, $c$ can be any constant satisfying $c\geq \max\{T(0),T(1)\}$ and the following inequality:
\begin{equation}
	\label{eq}
3\frac{q n^{(\log q)/g(n)^{1/2}}-1}{q-1}\leq n-3-\frac{n}{g(n)^{1/2}}+c\frac{g(n)^{1/2}}{\log n}
\end{equation}
for all $n\geq 2$.
\end{lemma}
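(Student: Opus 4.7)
The plan is to prove the contrapositive by the classical pumping argument on crossing sequences, in the spirit of Kobayashi. The idea is that the bound on $c$ in inequality~(\ref{eq}) is calibrated exactly so that assuming a long crossing sequence produces a contradiction with the running-time bound $T(n)$.

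First I would dispose of the trivial cases $n\in\{0,1\}$: any computation on such an input takes at most $T(n)\leq\max\{T(0),T(1)\}\leq c$ steps, so every crossing sequence on it has length at most $c$. Now suppose, for contradiction, that on some input $w$ of length $n\geq 2$ the machine $M$ produces a crossing sequence $L$ of length $|L|\geq c+1$ at some boundary $b^{*}$. Among all inputs on which such a long crossing sequence occurs, choose one that is ``minimal'' with respect to Corollary~\ref{hennie2} pumping: no segment of the active tape (not containing cell $0$ and not straddling $b^{*}$) can be deleted while preserving the long crossing. Set $k=\lfloor(\log n)/g(n)^{1/2}\rfloor$, and let $N_k=\sum_{i=0}^{k}q^{i}=(q^{k+1}-1)/(q-1)$ be the number of crossing sequences of length at most $k$ available to a $q$-state machine; note that $q^{k+1}\leq q\cdot n^{(\log q)/g(n)^{1/2}}$, so $3N_k$ matches the left-hand side of~(\ref{eq}).

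Next I would observe that cell $0$ and the boundary $b^{*}$ partition the active tape into three pumpable regions: (i)~left of cell $0$, (ii)~between cell $0$ and $b^{*}$, (iii)~right of $b^{*}$. If two boundaries in one of these regions produced the same crossing sequence of length $\leq k$, then Proposition~\ref{hennie1} together with Corollary~\ref{hennie2} would let me delete the segment between them without disturbing cell $0$, without moving across $b^{*}$, and without destroying $L$, contradicting minimality. Hence in each region there are at most $N_k$ boundaries carrying a crossing sequence of length $\leq k$, giving at most $3N_k$ such ``short'' boundaries across the whole active tape. Since $M$ moves its head at every step, the running time on $w$ equals the sum of the lengths of all crossing sequences, so
\begin{equation*}
T(n)\;\geq\;k\bigl(n - 3N_k - O(1)\bigr)\;+\;(c+1),
\end{equation*}
the $O(1)$ absorbing the edge boundaries $0$ and $n$, the singular contribution of $b^{*}$, and the integer rounding in the definition of $k$.

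Finally I would substitute $T(n)=n\log n/g(n)$, $T(n)/k=n/g(n)^{1/2}$, and $1/k=g(n)^{1/2}/\log n$, rearrange, and obtain a lower bound
\begin{equation*}
3N_k\;\geq\;n-O(1)-\tfrac{n}{g(n)^{1/2}}+(c+1)\tfrac{g(n)^{1/2}}{\log n}.
\end{equation*}
Combined with the hypothesized upper bound~(\ref{eq}) on $3N_k$, this forces $g(n)^{1/2}/\log n\leq O(1)-3$, which is impossible once the constants are tracked through, so no input $w$ with a crossing sequence longer than $c$ can exist. The main obstacle will be the bookkeeping that matches the $O(1)$ slack exactly to the explicit ``$-3$'' on the right-hand side of~(\ref{eq}): I need to verify that pumping in each of the three regions really only removes tape not containing cell $0$ or $b^{*}$, that ``minimal'' can be made compatible with keeping $L$ intact, and that the rounding in $k=\lfloor(\log n)/g(n)^{1/2}\rfloor$ and the handling of the two edge boundaries together account for no more than $3$ lost boundaries.
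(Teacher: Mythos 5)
Your outline is the same Kobayashi--Hennie argument the paper uses: bound the number of distinct ``short'' crossing sequences by $N_k=(q^{k+1}-1)/(q-1)$, use Proposition~\ref{hennie1}/Corollary~\ref{hennie2} to cut out a segment between two boundaries carrying the same crossing sequence on a minimal counterexample, and play this off against the running-time bound via $T(n)=\sum|C|$. The paper merely runs the implication in the other direction: the time bound forces \emph{more} than $3N_k+1$ short boundaries among $\{1,\dots,n_0-1\}$, pigeonhole then yields \emph{four} boundaries with the same crossing sequence, and cutting between a suitable pair (chosen so that the long boundary $b^*$ does not lie between them --- this is why four repeats, not two, are extracted) contradicts the minimality of $|w|$. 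Your contrapositive version (minimality forces all short crossing sequences to be pairwise distinct, hence at most $3N_k$ short boundaries, hence too much total time) is arithmetically equivalent, and the constants do close: restricted to the interior boundaries $\{1,\dots,n-1\}$ your $\OO(1)$ is exactly $2$ ($n-1$ boundaries, minus one for $b^*$ itself), which against the explicit $-3$ in Inequality~\eqref{eq} gives the contradiction $-2\le -3$.

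There is, however, one concrete defect in your setup. As stated, your minimal input does not exist: deleting a segment of blank cells outside the written input reglues to the \emph{identical} tape, so such a deletion always ``preserves the long crossing,'' and no input satisfies your minimality condition. Relatedly, the claim that each of your three regions contains at most $N_k$ boundaries with a crossing sequence of length $\le k$ is false for the two outer regions: every never-crossed boundary carries the empty crossing sequence, of length $0\le k$, and there are infinitely many of them (for the \emph{non-empty} repeats in the blank regions the correct argument is not minimality but the periodicity argument of Corollary~\ref{space}, which contradicts halting). Both problems disappear if you do what the paper does: take $w$ of minimal \emph{length} among inputs producing a crossing sequence longer than $c$, and count only the boundaries $1,\dots,n-1$ interior to the input, where any two equal crossing sequences (including two empty ones) permit a cut that strictly shortens the input while, by Proposition~\ref{hennie1}(b),(c), preserving the long crossing sequence at $b^*$. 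With that repair your region count drops to the two sides of $b^*$ inside the input and the rest of your bookkeeping goes through as you describe.
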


Note that since $\displaystyle\lim_{n\to \infty} g(n)=\infty$, then, for any $q\geq2$, there exists a constant $c\geq \max\{T(0),T(1)\}$ such that Inequality~\eqref{eq} holds for all $n\geq 2$. Now let us prove the lemma.
\begin{proof}
Let $M$ be a one-tape Turing machine with $q$ states that runs in time $T(n)$. Let $c\geq \max\{T(0),T(1)\}$ be such that Inequality~\eqref{eq} holds for all $n\geq 2$ and suppose that $M$ produces a crossing sequence of length more than $c$ on some input. Let $w$ be the shortest such input and let $n_0=|w|$. Note that $n_0\geq 2$ since $c\geq \max\{T(0),T(1)\}$. Suppose $w$ was given to $M$.

Let $h$ be the number of boundaries from $\{\tup{1}{2}{n_0-1}\}$ on which crossing sequences of lengths less than $(\log n_0)/g(n_0)^{1/2}$ were produced. Then we have 
$$\frac{n_0\log n_0}{g(n_0)}=T(n_0)>c+(n_0-2-h)\frac{\log n_0}{g(n_0)^{1/2}}$$
and hence
\begin{align*}
h> &\ n_0-2-\frac{n_0}{g(n_0)^{1/2}}+c\frac{g(n_0)^{1/2}}{\log n_0}\\
\geq  &\ 3\frac{q n_0^{(\log q)/g(n_0)^{1/2}}-1}{q-1}+1\\
= &\ 3\frac{q^{(\log n_0)/g(n_0)^{1/2}+1}-1}{q-1}+1.
\end{align*}

Moreover, a simple counting shows that there are at most $(q^{(\log n_0)/g(n_0)^{1/2}+1}-1)/(q-1)$ distinct crossing sequences of lengths less than $(\log n_0)/g(n_0)^{1/2}$.

Hence, by the pigeonhole principle, there exist at least four boundaries  in $\{\tup{1}{2}{n_0-1}\}$ on which the same crossing sequence $s$ was produced. Now if a crossing sequence of length more than $c$ was produced on some boundary $i\in\ZZ$, we can find two boundaries in $\{\tup{1}{2}{n_0-1}\}$ on which $s$ was produced, such that $i$ does not lie between them. If we cut away the subword of $w$ between those two boundaries, we get an input for $M$ of length less than $n_0$ on which $M$ produces a crossing sequence of length more than $c$. This contradicts the selection of $w$ and completes the proof of the lemma.\qed
\end{proof}
 
This lemma has some interesting consequences.



\begin{corollary}
	\label{space}
If a one-tape Turing machine $M$ runs in time $T(n)=\oo(n \log n)$, then there exists a constant $D$, such that $M$ on each input $w$ visits at most $|w|+D$ cells.
\end{corollary}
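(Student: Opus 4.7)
The plan is to combine Lemma~\ref{kobi} with a pigeonhole argument on crossing sequences and then invoke Proposition~\ref{hennie1} to derive a contradiction whenever too many cells outside the input are visited.

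First I would apply Lemma~\ref{kobi} to obtain a constant $c$ such that every crossing sequence produced by $M$ on any input has length at most $c$. If $q$ denotes the number of states of $M$, then there are fewer than $N:=q^{c+1}$ distinct crossing sequences of length at most $c$. I then claim that $D:=2N$ works: $M$ visits at most $N$ cells strictly to the right of the input, and by a symmetric argument at most $N$ cells strictly to the left.

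For the rightward bound I proceed by contradiction. Suppose on some input $w$ with $|w|\geq 1$ that $M$ visits the cells $|w|,|w|+1,\ldots,|w|+R-1$ with $R>N$. Since $M$ must cross the boundary at position $|w|+k$ in order to enter cell $|w|+k$, the crossing sequences $s_0,s_1,\ldots,s_{R-1}$ (where $s_k$ denotes the crossing sequence at boundary $|w|+k$) are all nonempty and of length at most $c$. By the pigeonhole principle there exist $0\leq i<j\leq R-1$ with $s_i=s_j$. Next I would decompose the input tape in two ways: once as $\tau_1\tau_2$ joined at boundary $|w|+i$ and once as $\tilde{\tau}_1\tau_2$ joined at boundary $|w|+j$, where $\tau_2$ is in both cases the semi-infinite all-blank right segment (and $\tau_1$, $\tilde{\tau}_1$ are the respective left pieces of the same underlying tape). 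Since $s_i=s_j$, Proposition~\ref{hennie1}(c) tells us that the crossing sequences at corresponding boundaries inside $\tau_2$ agree in the two decompositions, which translates into the shift identity $s_{i+k}=s_{j+k}$ for every $k\geq 0$.

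To finish, I would take $k:=R-j$, obtaining $s_R=s_{j+k}=s_{i+k}=s_{R-(j-i)}$. Since $R-(j-i)\leq R-1$, cell $|w|+R-(j-i)$ is visited, so $s_{R-(j-i)}$ is nonempty; on the other hand cell $|w|+R$ is not visited, so $s_R$ is empty. This is the sought contradiction, and it gives $R\leq N$. A symmetric argument, invoking the remark following Proposition~\ref{hennie1} that extends the statement to non-positive joining boundaries, bounds the number of cells $M$ visits strictly to the left of the input by $N$ (and the corner case $|w|=0$ is handled by the same ideas). Altogether $M$ visits at most $|w|+2N$ cells on every input, so $D:=2N$ suffices. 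The main obstacle is the careful setup of the two decompositions so that Proposition~\ref{hennie1}(c) actually delivers the shift identity $s_{i+k}=s_{j+k}$; once this identity is in hand the contradiction is immediate.
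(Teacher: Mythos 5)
Your proposal is correct and follows essentially the same route as the paper: Lemma~\ref{kobi} bounds crossing-sequence lengths, the pigeonhole principle yields two equal non-empty crossing sequences on the all-blank region beyond the input, and the Hennie-type shift property on that blank segment produces the contradiction. The only (cosmetic) difference is in the last step: the paper contradicts finiteness of the running time via infinitely many non-empty crossing sequences, while you contradict the emptiness of the crossing sequence at the first unvisited boundary, which works equally well.
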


\begin{proof}
After Lemma~\ref{kobi}, the length of crossing sequences produced by $M$ is bounded by a constant and thus $M$ produces only constantly many crossing sequences. If $K$ is this constant, let us prove that $M$ visits at most $K$ cells to the right of the input.

Suppose that this is not true for some input $w$. If we run $M$ on $w$, then there are at least two boundaries with index greater than or equal to $|w|$, say $i$ and $j$, that produce the same non-empty crossing sequence. At the beginning of the computation we only have blank symbols between those two boundaries, thus all boundaries $i+k|j-i|$ for $k\in\NN$ produce the same crossing sequence. This is a contradiction with $M$ running in finite time, thus $M$ visits at most $K$ cells to the right of the input. 

The same way we can show that $M$  visits at most $K$ cells to the left of the input, which completes the proof.\qed
\end{proof}

The next corollary gives us even more feeling about the $\oo(n \log n)$ bound on running time for a one-tape Turing machine.

\begin{corollary}
	\label{time}
If a one-tape Turing machine runs in time $\oo(n \log n)$, then it runs in linear time.
\end{corollary}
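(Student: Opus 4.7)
The plan is to combine Lemma~\ref{kobi} with Corollary~\ref{space} and the observation that the running time of a one-tape Turing machine equals the total length of all its crossing sequences.

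First I would recall that, by our conventions, at every step the head moves one cell either left or right, so every step crosses exactly one boundary between two tape cells. Consequently, for any input $w$, the total number of steps made by the machine $M$ on $w$ equals $\sum_{i\in\ZZ}|C_i(w)|$, where $C_i(w)$ is the crossing sequence produced by $M$ on boundary $i$ when run on $w$.

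Next I would apply Lemma~\ref{kobi} to obtain a constant $c$ (depending only on $M$) such that $|C_i(w)|\leq c$ for every input $w$ and every boundary $i$. Then I would apply Corollary~\ref{space} to obtain a constant $D$ such that $M$ visits at most $|w|+D$ cells during the computation on $w$. Boundaries $i$ that are not crossed by the head contribute $|C_i(w)|=0$ to the sum, so the number of boundaries with a non-empty crossing sequence is bounded by $|w|+D$ as well. Putting these together yields
\[
\#(\text{steps of $M$ on $w$})=\sum_{i\in\ZZ}|C_i(w)|\leq c\,(|w|+D),
\]
which is linear in $|w|$. Hence $M$ runs in time $c\,n+cD$, i.e.\ in linear time.

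The only mildly subtle point is the bookkeeping step that turns ``number of visited cells'' into ``number of boundaries carrying non-empty crossing sequences''; this is immediate once one notes that if the visited cells form an interval $[a,b]$, then only boundaries $a+1,\ldots,b$ can be crossed, giving at most $b-a\leq |w|+D-1$ contributing boundaries. No further obstacle is expected, since both main ingredients (the bound on crossing-sequence length and the bound on visited space) have already been established.
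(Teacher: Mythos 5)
Your proposal is correct and follows essentially the same route as the paper: the paper's proof likewise observes that the number of steps equals the sum of the lengths of all crossing sequences, bounds each summand by a constant via Lemma~\ref{kobi}, and bounds the number of non-empty summands by $|w|+D$ via Corollary~\ref{space}. Your extra bookkeeping about the interval of visited cells is just a more explicit version of the same counting.
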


\begin{proof}
Let $M$ be a one-tape Turing machine that runs in time $\oo(n \log n)$. The main observation is that $M$ on each input $w$ halts exactly after $\sum |C|$ steps, where the sum is over all crossing sequences $C$ produced on boundaries of the tape.
From Lemma~\ref{kobi} it follows that each addend is bounded by a constant and from Corollary~\ref{space} it follows that there are at most $|w|+D$ of them for some constant $D$.\qed
\end{proof}

The following lemma makes an introduction to the proof of Theorem~\ref{main}.
\begin{lemma}
	\label{construct}
Let $T:\NN\rightarrow\RR^+$ be a function for which  $\lfloor T\rfloor$ is computable and the function 
$$g(n)=\left\{
	\begin{array}{lcl}
		\frac{n \log n}{T(n)}&;& n\geq 2 \\
		1 &;& n=0,1
	\end{array}
\right.$$
computably converges to $\infty$. Then given $q\in\NN$, we can compute a constant upper bound of the length of the crossing sequences produced by any $q$-state one-tape Turing machine that runs in time $T(n)$.
\end{lemma}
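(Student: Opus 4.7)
The plan is to make the existential constant of Lemma~\ref{kobi} effective: I need to algorithmically produce, given $q$, some $c\geq\max\{T(0),T(1)\}$ for which Inequality~\eqref{eq} holds at every $n\geq 2$. The strategy is the standard split into a computable tail plus a finite check.

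For the tail, fix $K=\lceil 4(\log q)^2\rceil+4$, so that $(\log q)/\sqrt K\leq 1/2$. The computable convergence of $g$ yields an $n_K$ with $g(n)\geq K$ for $n\geq n_K$, and a calculation purely in $q$ and $K$ produces an $N$ beyond which $3(q\sqrt n-1)/(q-1)\leq n-3-n/\sqrt K$ (the LHS is $O(\sqrt n)$, the RHS grows linearly with slope at least $1/2$). For $n\geq N^{*}:=\max(N,n_K,2)$, the bounds $n^{(\log q)/\sqrt{g(n)}}\leq\sqrt n$ and $n/\sqrt{g(n)}\leq n/\sqrt K$ then imply Inequality~\eqref{eq} for any $c\geq 0$.

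Only the finitely many $n\in\{2,\ldots,N^{*}-1\}$ remain, together with the side condition $c\geq\max\{T(0),T(1)\}$. For each such $n$ I would compute an integer $c_n$ large enough to satisfy Inequality~\eqref{eq}, and output $c=\max_{2\leq n<N^{*}}c_n+\lfloor T(0)\rfloor+\lfloor T(1)\rfloor+2$, which dominates every $c_n$, secures the side condition, and handles the tail $n\geq N^{*}$ for free by the previous paragraph.

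The main technical obstacle is that $T(n)$ is not exactly computable---only $\lfloor T\rfloor$ is---whereas Inequality~\eqref{eq} contains irrational quantities like $n^{(\log q)/\sqrt{g(n)}}$. The way around this is \emph{monotonicity in $g$}: shrinking $g(n)$ enlarges the LHS and, provided $c\geq 0$, shrinks the $g$-dependent terms on the RHS, so the inequality only becomes harder. Replacing $g(n)$ throughout by the computable rational lower bound $g_{\mathrm{low}}(n):=n\log n/(\lfloor T(n)\rfloor+1)$ thus gives a safely conservative check that also works for the true $g(n)$; for each of the finitely many small $n$ this turns the search for $c_n$ into a finite rational computation carried out to any desired precision.
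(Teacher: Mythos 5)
Your proposal is correct and follows essentially the same route as the paper's proof: a computable tail threshold obtained from the convergence of $g$ (chosen so that the exponent in Inequality~\eqref{eq} drops to $1/2$ and the inequality holds for every $c\geq 0$), a finite check over the remaining small $n$ using the computability of $\lfloor T\rfloor$, and a final increase of $c$ to secure $c\geq\max\{T(0),T(1)\}$. Your explicit appeal to monotonicity in $g$ via the computable lower bound $n\log n/(\lfloor T(n)\rfloor+1)$ merely spells out what the paper's remark that $g(n)\geq 1/(\lfloor T(n)\rfloor+1)$ is gesturing at.
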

\begin{proof}
After Lemma~\ref{kobi}, we only need to construct a constant $c\geq  \max\{T(0),\ T(1)\}$ which satisfies Inequality~\eqref{eq}
for the given $q$ and all $n\geq 2$. The construction of $c$ can go as follows:
\begin{itemize}
\item Use computable convergence of $g$ to find $N \in\NN$ such that for all $n\geq N$ holds $g(n)\geq 4(\log q)^2$. Increase $N$ if necessary so that, for all $n\geq N$, it also holds $\sqrt{n}\leq\frac{1}{12}n$ and $g(n)\geq 16$.

It is easy to see that Inequality~\eqref{eq} holds for all $n\geq N$ independently of the value of $c\geq 0$.
\item Use computability of $\lfloor T\rfloor$ to find such $c\in\NN$ that the Inequality~\eqref{eq} holds for $2\leq n< N$.

Note that $g(n)\geq \frac{1}{\lfloor T(n)\rfloor+1}$ for $n\geq 2$.
\item Increase $c$ to get $c\geq \max\{ T(0),\ T(1)\}$. \qed
\end{itemize}
\end{proof}

\paragraph{Proof of Theorem~\ref{main}.}
Let  $T:\NN\rightarrow\RR^+$ be a manageable function for which the function $\frac{n\log n}{T(n)}$ computably converges to $\infty$. Because $T$ is manageable, $\lfloor T\rfloor$ is computable after Proposition~\ref{compf}. The following algorithm verifies whether a one-tape Turing machine $M$ with $q$ states runs in time $T(n)$, thus solving $\halt{T(n)}^1$.
\begin{enumerate}[(i)]
\item  Use Lemma~\ref{construct} to \step{construct an upper bound $c\in\NN$ on the length of the crossing sequences produced by  any one-tape Turing machine with $q$ states that runs in time $T(n)$.}
\item  \step{Compute} $$K=\frac{q^{c+1}-1}{q-1},$$ which is an upper bound on the number of distinct crossing sequences produced by any one-tape Turing machine with $q$ states that runs in time $T(n)$.

Now the idea is as follows. If $M$ runs in time $T(n)$, then we can (cleverly) split each input into parts of lengths at most $K$, on which $M$ produces pairwise distinct crossing sequences. Then we only have to worry about how many steps $M$ makes on each of those parts.
\item \step{Define sets $X=\emptyset$ and $S=\emptyset$.}

The following steps are defined in such a way that, if $M$ runs in time $T(n)$, then 
\begin{itemize}
\item at the end of Step~(\ref{stepX}) the set $X$ will contain all inputs $x$ for which $M$ produces pairwise distinct crossing sequences on boundaries $\{i;\ 0<i\leq|x|\}$ and
\item at the end of Step~(\ref{stepY}) the set $S$ will contain all crossing sequences produced by $M$ on all inputs $w$ on boundaries $\{i;\ 0<i\leq|w|\}$.
\item Note that the cardinality of $S$ and the length of each $x\in X$ will be bounded by $K$.
\end{itemize}
\item 
	\label{stepX}
For each input $w$ of length $|w|\leq K$, simulate $M$ on $w$ and
\begin{itemize}
	\item if $M$ makes more than $T(|w|)$ steps, \textbf{return 0}. Else,
	\item add all crossing sequences produced on boundaries $\{i;\ 0<i\leq|w|\}$ to $S$ and
	\item if these crossing sequences are pairwise distinct, add $w$ to $X$ and
	\item define $T_w$ as the number of steps $M$ makes on input $w$.
\end{itemize}
\item For each crossing sequence $s\in S$, define the set $Y_s=\emptyset$.

We say that a word $y$ is \emph{compatible} with $s$, if there exists an input $w$ such that $w=w_1yw_2$ and $M$ produces   $s$ on boundaries $|w_1|>0$ and $|w_1|+|y|$. Additionally, if the crossing sequences produced on boundaries $\{i;\ |w_1|<i\leq|w_1|+|y|\}$ are pairwise distinct, we say that $y$ is \emph{primitive compatible} with $s$. Note that the definition does not depend on the choice of words $w_1$ and $w_2$.

If $M$ runs in time $T(n)$, then the set $Y_s$ will at the end of Step~(\ref{stepY}) contain all non-empty words that are primitive compatible with $s$.
\item  
	\label{stepY}
For each crossing sequence $s\in S$ and each word $y$ of length $0<|y|\leq K$,  if $y$ is primitive compatible with $s$, put $y$ in $Y_s$. For such a $y$, if $w_1yw_2$ is some input that induces the compatibility of $y$ with $s$, put all crossing sequences that $M$ produces on input $w_1yw_2$ on boundaries $\{i;\ |w_1|<i<|w_1|+|y|\}$ in $S$.

We can realize this step with a simple simulation of $M$ on word $y$, by which we pretend that on the left and right of $y$ are some words that collaborate in the production of crossing sequence $s$ on both ends of $y$:
%
\begin{itemize}
	\item If $s$ is the empty sequence, then $y$ is primitive compatible with $s$ iff $y$ is a one-symbol word. Else, let $s=\tup{q_1}{q_2}{q_k}$.
	\item Suppose the input $y$ is written on $M$'s tape while $M$ is in state $q_1$ and has its head above cell 0.
	\item Simulate $M$ until it leaves the portion of the tape where $y$ was written.
\begin{itemize}
		\item If a crossing sequence of length greater than $c$ is produced, $M$ does not run in time $T(n)$, thus \textbf{return 0}.
		\item If $M$ halted before its head crossed boundaries $0$ or $|y|$, then $y$ is not compatible with $s$.
		\item If $M$'s head crossed boundary $0$ in a state $\neq q_2$ or boundary $|y|$ in a state $\neq q_1$, then $y$ is not compatible with $s$.
\end{itemize}
	\item If $M$'s head crossed boundary $0$ in state $q_2$, put its head back on cell $0$, change its state to $q_3$ and continue with simulation. If there is no state $q_3$ (\ie $k=2$), then $y$ is not compatible with $s$.
	\item If $M$'s head crossed boundary $|y|$ in state $q_1$, then 
\begin{itemize}
		\item if $k>1$, change $M$'s state to $q_2$, put its head on cell $(|y|-1)$ and continue with simulation.
		\item if $k=1$, stop the simulation and if crossing sequences produced on boundaries $0<i<|y|$ are pairwise distinct and none is equal to $s$, then $y$ is primitive compatible with $s$, otherwise $y$ is not primitive compatible with $s$.
\end{itemize}
	\item Continue with the simulation as just described until it is found whether $y$ is primitive compatible with $s$ or not. If it is, add the crossing sequences produced on boundaries $\{i;0<i<|y|\}$  during simulation to $S$. 
\end{itemize}

For $y\in Y_s$, define $T_{s,y}=|s|+\sum |C|$ where the sum is over all crossing sequences $C$ produced on boundaries $\{\tup{1}{2}{|y|-1}\}$ in the above simulation. It follows that $T_{s,y}$ is the time $M$ spends on part $y$ of any input that induces compatibility of $y$ with $s$.

Note that $S$ can acquire new elements during this simulation, thus we have to construct $Y_s$ for those elements too. Since $S$ contains only crossing sequences of length at most $c$, the procedure is finite.
\item 
	\label{impl}
Check whether all non-empty words can be constructed in the following way:
\begin{itemize}
	\item Start with some word $w\in X$. 
	\item Repeat the next line finitely many times (possibly 0):
	\begin{itemize}
		\item Suppose $M$ on input $w$ produces a crossing sequence $s$ on some boundary $i$, $0<i\leq|w|$. Insert some word from $Y_s$ into $w$ on the place where $s$ was produced.
	\end{itemize}
\end{itemize}

A possible implementation of this step is given later in Lemma~\ref{implementation}. If there exists some non-empty word that cannot be constructed this way, \textbf{return 0}.

It is important to see that, if $M$ runs in time $T(n)$, then each non-empty word $w$ can be constructed this way. This is because such $M$ can generate at most $K$ different crossing sequences on input $w$ and, if $w\not\in X$, then at least two crossing sequences produced on boundaries $\{\tup{1}{2}{|w|}\}$ must be the same. Hence, we can find and cut out a section of $w$ that is primitive compatible with some crossing sequence produced on both of section's ends.
If we continue with cutting such sections out, we are left with a word $x\in X$ at the end and thus each section that was cut out belongs to some $Y_s$.
So we can find all ``parts'' of $w$ in sets $Y_s$ and $X$ and we can construct $w$ as described above.
\item If the number of steps $M$ makes on inputs $X$ and parts $Y_s$ is appropriate for $M$ to run in time $T(n)$, then \textbf{return 1}, else \textbf{return 0}.

This step can be implemented in the following way:
\begin{itemize}
	\item For each $x\in X$ and for all choices of subsets $\tilde{Y}_s\subseteq Y_s$, for which we can construct some word as in Step~(\ref{impl}) by starting with $x$ and inserting only words from sets $\tilde{Y}_s$ on appropriate places such that each word from each $\tilde{Y}_s$ is used exactly once,
%
\begin{itemize}
		\item use manageability of $T$ to check whether the inequality
$$T_x+\sum_{s\in S\vphantom{\tilde{Y}_s}}\sum_{y\in \tilde{Y}_s}T_{s,y}\cdot k_{s,y}\leq T\left(|x|+\sum_{s\in S\vphantom{\tilde{Y}_s}}\sum_{y\in \tilde{Y}_s}|y|\cdot k_{s,y}\right)$$
holds for all $k_{s,y}\in\NN\backslash\{0\}$. If it does not, \textbf{return 0}.
\end{itemize}
\item \textbf{Return 1}.
\end{itemize}

Note that in the argument of $T$ on the right-hand side of the inequality, we have the length of some word constructed as in Step~(\ref{impl}) by starting with $x$ and inserting $k_{s,y}$ words $y\in Y_s$ on appropriate places. On the left-hand side we have the number of steps that $M$ makes on such an input. Step~(\ref{impl})  tells that all non-empty inputs are considered this way and the condition before the inequality tells that it is possible to use only and all parts from sets  $\tilde{Y}_s$ at once.
\end{enumerate}

The comments inside the algorithm show its finiteness and correctness. The only thing missing is the implementation of Step~(\ref{impl}), which is described next.

Suppose $X$, $S$ and $Y_s$ are as at the end of Step~(\ref{stepY}). For each $\tilde{S}\subseteq S$ and $s\in \tilde{S}$, define the language
\begin{align*}
L_{s,\tilde{S}}=\{&\textrm{words compatible with }s\textrm{ on which $M$ produces only crossing sequences from }\tilde{S}\\
&\textrm{on the simulation described in Step~(\ref{stepY})}\}.
\end{align*}

In the next few results we use some well known facts about regular expressions, regular languages and finite automata. All the material needed can be found \eg in~\cite[Chapter 1]{sipser}. Also, the Proposition~\ref{regbasic} will prove useful.
\begin{lemma}
	\label{regular}
There exists an algorithm which can be used as a subroutine in  Step~(\ref{impl}) that, given $\tilde{S}\subseteq S$ and $s\in \tilde{S}$, it constructs a deterministic finite automaton that recognizes  $L_{s,\tilde{S}}$.
\end{lemma}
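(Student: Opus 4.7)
The plan is to construct a nondeterministic finite automaton (NFA) whose states are the crossing sequences in $\tilde{S}$ and then determinize it by the standard subset construction. The key observation is that $y\in L_{s,\tilde{S}}$ iff there exists a sequence $s=C_0,C_1,\ldots,C_{|y|}=s$ of elements of $\tilde{S}$ such that each consecutive pair $(C_{i-1},C_i)$ is \emph{locally consistent} with $M$ acting on a single cell containing $y_i$: meaning that, on an isolated cell containing the symbol $y_i$, there is a valid behavior of $M$ whose crossing sequence at the left boundary equals $C_{i-1}$ and at the right boundary equals $C_i$.

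First, I argue that local consistency is decidable. In the simulation of Step~(\ref{stepY}), the head starts in the leftmost cell of $y$, which corresponds, in the global picture $w_1yw_2$, to an entry from the left; hence every cell of $y$ is first visited from the left, and each subsequent visit enters from the same side as the previous exit. This means that the order and side of visits to a single cell are uniquely forced by $M$'s transition function applied to the evolving cell content, so local consistency reduces to a bounded deterministic simulation that either exhausts $C_{i-1}$ and $C_i$ simultaneously, halts inside the cell matching the prescribed truncated crossings, or detects a mismatch.

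Given the computable local consistency relation, I construct the NFA $N_{s,\tilde{S}}$ with state set $\tilde{S}$, initial state $s$, unique accepting state $s$, and a transition $C \xrightarrow{a} C'$ for each $a\in\Sigma$ whenever $(C,C',a)$ is locally consistent. The standard subset construction then produces the required DFA. The main obstacle is verifying that local consistency lifts to a globally valid computation: that accepting runs of $N_{s,\tilde{S}}$ on $y$ correspond exactly to Step~(\ref{stepY}) simulations in which all produced crossing sequences lie in $\tilde{S}$. This is a gluing argument: at each internal boundary $i$ of $y$, the crossings are shared between cells $i$ and $i+1$—the $j$-th rightward crossing at boundary $i$ is simultaneously the exit of some visit to cell $i$ and the entry of some visit to cell $i+1$ in the same state—so locally consistent cell behaviors chain into a consistent global computation of $M$ on $y$, and conversely restricting the Step~(\ref{stepY}) simulation to each cell yields the locally consistent transitions of an accepting run of $N_{s,\tilde{S}}$.
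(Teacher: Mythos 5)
Your proof is correct, but it takes a genuinely different route from the paper. The paper proceeds by induction on $|\tilde{S}|$: it writes $L_{s,\tilde{S}}$ as a regular expression, namely the Kleene star of a finite union of concatenations $y_1L_{s_1,\bar{S}}y_2\cdots L_{s_{k-1},\bar{S}}y_k$, where $y_1\cdots y_k$ ranges over the primitive compatible words in $Y_s$ already computed in Step~(\ref{stepY}), the $s_i$ are the crossing sequences produced between their symbols, and $\bar{S}=\tilde{S}\backslash\{s\}$; the induction is well-founded because primitivity forces those internal crossing sequences to differ from $s$, and Proposition~\ref{regbasic} then converts the expression into a DFA. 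You instead build the classical Hennie-style crossing-sequence NFA with state set $\tilde{S}$ and a local single-cell consistency relation, and determinize. The two constructions buy different things: the paper's reuses exactly the data the algorithm has in hand after Step~(\ref{stepY}) and its correctness is the same cut-and-paste decomposition that justifies Step~(\ref{impl}); yours is more self-contained and avoids the recursion entirely, at the price of having to prove the local-to-global gluing equivalence with the Step~(\ref{stepY}) simulation. You correctly identify that equivalence as the crux, and your justification is sound: since the head starts in $w_1$ (to the left of $y$, as $|w_1|>0$), the first crossing of every boundary of $y$ is rightward, so crossings of a fixed boundary alternate in direction; together with determinism of $M$ this makes the interleaving of $C_{i-1}$ and $C_i$ within each cell unique, forces adjacent cells to agree on every event at their shared boundary, and links every event back to the initial entry, ruling out spurious disconnected excursions. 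Hence an accepting run of your NFA exists iff the deterministic simulation of Step~(\ref{stepY}) succeeds with all internal crossing sequences in $\tilde{S}$, which is exactly membership in $L_{s,\tilde{S}}$.
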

\begin{proof}
We will prove this by induction on the size of $\tilde{S}$. If $\tilde{S}=\{s\}$, then 
$$L_{s,\tilde{S}}=\left\{
	\begin{array}{lcl}
		\{\tup{y_1}{y_2}{y_k}\}^*  &;&Y_s \textrm{ contains one-symbol words }\tup{y_1}{y_2}{y_k}; \\
		\emptyset  &;&Y_s \textrm{ contains no one-symbol word},
	\end{array}
\right.$$
and the lemma holds. 
If $\tilde{S}$ has more than one element, then
$$L_{s,\tilde{S}}=\left(\bigcup_{\substack{y_1y_2\cdots y_k\in Y_s\\ \tup{s_1}{s_2}{s_{k-1}}\in\tilde{S}}}\left\{y_1L_{s_1,\bar{S}}y_2L_{s_2,\bar{S}}\cdots L_{s_{k-1},\bar{S}}y_k\right\}\right)^*$$
where
\begin{itemize}
\item $\tup{y_1}{y_2}{y_k}$ are symbols of some word from $Y_s$, 
\item $\tup{s_1}{s_2}{s_{k-1}}$ are crossing sequences generated between $\tup{y_1}{y_2}{y_k}$ on the simulation described in Step~(\ref{stepY}), and
\item $\bar{S}=\tilde{S}\backslash\{s\}$.
\end{itemize}
By the induction hypothesis we can construct deterministic finite automata that recognize languages $L_{s_i,\bar{S}}$. After Proposition~\ref{regbasic} we can construct a deterministic finite automaton that recognizes $L_{s,\tilde{S}}$.\qed
\end{proof}

The next lemma finishes the proof of Theorem~\ref{main}.
\begin{lemma}
	\label{implementation}
We can implement Step~(\ref{impl}).
\end{lemma}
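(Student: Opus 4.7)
The plan is to recognize that the set of words constructible as in Step~(\ref{impl}) is regular, and then to reduce Step~(\ref{impl}) to the (decidable) problem of inclusion between regular languages. Concretely, for each $x\in X$ with $|x|\geq 1$, write $x=x_1x_2\cdots x_n$ and let $\tup{s_1}{s_2}{s_n}$ be the crossing sequences that $M$ produces on input $x$ on boundaries $\tup{1}{2}{n}$. Define
\[
C(x)=\{x_1\}\, L_{s_1,S}\,\{x_2\}\, L_{s_2,S}\cdots\{x_n\}\,L_{s_n,S},
\]
and $C(\epsilon)=\{\epsilon\}$. Set $L=\bigcup_{x\in X}C(x)$. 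I claim that $L$ is exactly the set of words that can be constructed by the process described in Step~(\ref{impl}); Step~(\ref{impl}) then amounts to testing whether $\Sigma^+\subseteq L$.

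To justify the claim, I would argue both inclusions. For the ``$\supseteq$'' direction, the process of Step~(\ref{impl}) starts from some $x\in X$ and, at each boundary currently carrying a crossing sequence $s$, inserts a word from $Y_s$; by the definition of primitive compatibility, such an insertion keeps the crossing sequence at both new endpoints equal to $s$ and only creates new internal boundaries whose crossing sequences are among those added to $S$ during Step~(\ref{stepY}). Hence a straightforward induction on the number of insertions shows that the word produced lies in $C(x)$. For the ``$\subseteq$'' direction, I would expand the recursive description of $L_{s,S}$ used in the proof of Lemma~\ref{regular}: every word in $L_{s,S}$ arises by choosing a word $y\in Y_s$ and substituting, at its internal boundaries (which by construction carry crossing sequences in $\tilde{S}\subseteq S$), further words from the corresponding $L_{s',S}$. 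Unwinding this recursion yields precisely a sequence of insertions of the sort allowed by Step~(\ref{impl}).

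With the identification $L=\bigcup_{x\in X}C(x)$ in hand, the algorithmic part is routine. The set $X$ is finite (all its members have length at most $K$), and $S$ is finite (at most $K$ elements), so Lemma~\ref{regular} gives a deterministic finite automaton recognizing each $L_{s,S}$ for $s\in S$. Applying Proposition~\ref{regbasic} finitely many times to perform the required concatenations and unions produces a deterministic finite automaton $A_L$ recognizing $L$. I would then construct the (trivial) deterministic finite automaton for $\Sigma^+$ and decide $\Sigma^+\subseteq L$ by the standard emptiness test on a DFA for $\Sigma^+\cap\overline{L}$, using Proposition~\ref{regbasic} once more. If the inclusion fails, Step~(\ref{impl}) returns $0$; otherwise it passes control to the next step of the algorithm.

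The main obstacle I expect is the rigorous verification of $L=\{\text{words constructible in Step~(\ref{impl})}\}$, in particular the bookkeeping around crossing sequences: one must carefully check that the invariant ``every boundary of the currently constructed word carries a crossing sequence from $S$'' is preserved under insertions, and conversely that the recursive structure underlying $L_{s,S}$ in Lemma~\ref{regular} matches the insertion process exactly. Once this correspondence is established, everything else is a standard application of the closure properties of regular languages recorded in Proposition~\ref{regbasic}. \qed
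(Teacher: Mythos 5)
Your proposal is correct and follows essentially the same route as the paper: both identify the set of constructible words with the regular language $\bigcup_{x_1\cdots x_k\in X}\{x_1 L_{s_1,S}x_2 L_{s_2,S}\cdots x_k L_{s_k,S}\}\cup\{\epsilon\}$, build a deterministic finite automaton for it via Lemma~\ref{regular} and Proposition~\ref{regbasic}, and then decide universality (the paper checks that all reachable states are accepting, you equivalently test $\Sigma^+\subseteq L$ by complementation and emptiness). The only difference is that you elaborate the correctness of the identification, which the paper dismisses as ``easy to see.''
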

\begin{proof}
It is easy to see that we only need to give an algorithm which verifies whether all words are in the set
$$L=\bigcup_{x_1x_2\cdots x_k\in X}\left\{x_1L_{s_1,S}x_2L_{s_2,S}\cdots x_k L_{s_{k},S}\right\}\bigcup\{\epsilon\},$$
where $\tup{x_1}{x_2}{x_k}$ are symbols of some word from $X$ and $\tup{s_1}{s_2}{s_{k}}$ are crossing sequences generated on boundaries $\{\tup{1}{2}{k}\}$ on simulation of $M$ on input $x_1x_2\cdots x_k$.

We can use Lemma~\ref{regular} to construct a deterministic finite automaton $F$ that recognizes the language $L$. It follows that $L$ contains all words iff all reachable states of $F$ are accepting, which can algorithmically be verified.\qed
\end{proof}

The next theorem is an improvement over the result of Hartmanis~\cite{Hartmanis} which states that one-tape Turing machines that run in $\oo(n\log n)$ time accept only regular languages. Note that, because of Corollary~\ref{time}, one-tape Turing machines that run in $\oo(n\log n)$ time, run in linear time.
\begin{theorem}
	\label{constructive}
There exists an algorithm that takes integers $C,D\in\NN$ and a one-tape Turing machine $M$ as inputs and if $M$ runs in time $Cn+D$, returns an equivalent deterministic finite automaton.
\end{theorem}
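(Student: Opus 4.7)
The plan is to reuse the constructive machinery developed in the proof of Theorem~\ref{main}. The bound $T(n)=Cn+D$ is manageable by Proposition~\ref{examples} and $\frac{n\log n}{T(n)}$ computably converges to infinity, so Theorem~\ref{main} applies. First run its decision procedure to verify that $M$ actually runs in time $T(n)$; if not, the hypothesis of the theorem is vacuously false and any DFA may be returned (\eg one recognizing $\emptyset$). Otherwise the same algorithm produces as intermediate data the bound $c$ from Lemma~\ref{construct}, the set $X$ of short ``irreducible'' inputs (of length at most the computed constant $K$), the set $S$ of crossing sequences that can arise, and the sets $Y_s$ of primitive-compatible words for every $s\in S$. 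For each $x=x_1\cdots x_k\in X$ I additionally simulate $M$ on $x$ in at most $CK+D$ steps to record $M(x)\in\{0,1\}$ together with the list $s_1(x),\ldots,s_k(x)$ of crossing sequences produced at boundaries $1,\ldots,k$.

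The DFA to return is the one recognizing
$$L=\bigcup_{\substack{x=x_1\cdots x_k\in X \\ M(x)=1}}\bigl\{x_1\,L_{s_1(x),S}\,x_2\,L_{s_2(x),S}\cdots x_k\,L_{s_k(x),S}\bigr\},$$
where each $L_{s,S}$ is the regular language from Lemma~\ref{regular}. That lemma constructs a DFA for every $L_{s,S}$, and Proposition~\ref{regbasic} assembles the finite concatenations and the outer finite union into a single DFA.

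Correctness reduces to showing $L=L(M)$. For the inclusion $L\subseteq L(M)$, I would prove inductively that inserting any $u_i\in L_{s_i(x),S}$ at boundary $i$ of some $x\in X$ preserves $M$'s value and keeps $s_i(x)$ on both ends of the inserted piece: a single insertion is witnessed by some word $v_1 y v_2$ from the definition of primitive compatibility, and Proposition~\ref{hennie1}(a), applied to swap the right segment of $x$ in for $v_2$, converts the witness into a statement about the actual input while preserving both acceptance and the surrounding crossing sequences (via parts (b)--(c) of the same proposition). For the inclusion $L(M)\subseteq L$, I would reuse the cutting argument from Step~(\ref{impl}) of the proof of Theorem~\ref{main}: given $w$ with $M(w)=1$, repeatedly apply Corollary~\ref{hennie2} with $n=0$ to excise primitive-compatible sections between pairs of boundaries carrying the same crossing sequence; each excision preserves $M(w)=1$ and the process terminates at some $x\in X$ with $M(x)=1$, placing $w$ in the corresponding term of $L$.

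The main obstacle is the crossing-sequence bookkeeping in the soundness direction: after an insertion one must confirm that both ends of the inserted piece really do carry $s_i(x)$, not merely that the piece is locally compatible with it, and this is exactly what the combination of Proposition~\ref{hennie1}(b)--(c) was designed to deliver. Given the infrastructure already built for Theorem~\ref{main}, the remaining work amounts to restricting the union over skeletons $x\in X$ to those with $M(x)=1$ and then assembling the resulting regular expression into a DFA.
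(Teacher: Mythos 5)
Your proposal is correct and follows essentially the same route as the paper: verify the time bound with the algorithm of Theorem~\ref{main} (noting that manageability and computable convergence of $\frac{n\log n}{Cn+D}$ hold uniformly in the parameters $C,D$), then restrict the union defining the language to the skeletons $x\in X$ that $M$ accepts and assemble the DFA from the automata for the languages $L_{s,S}$ via Lemma~\ref{regular} and Proposition~\ref{regbasic}. The only cosmetic differences are that the paper handles the empty word through an explicit set $E$ and dispatches the degenerate cases $C=0$ and $D=0$ separately, both of which your formulation absorbs without difficulty.
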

\begin{proof}
Let integers $C,D\in\NN$ and a one-tape Turing machine $M$ with $q$ states be given. The case when $C=0$ is easy and is left for the reader. If $D=0$, then $M$ does not run in time $Cn+D$ because this would imply that $M$ makes no step on empty input. So we can suppose $C,D>0$.

Because of the simplicity of the function $T(n)=Cn+D$,
we can still use the manageability of $T$ and the computable convergence to $\infty$ of $\frac{n\log n}{T(n)}$, although $C$ and $D$ are given as parameters. Hence, we can verify whether $M$ runs in time $T(n)$ with the above algorithm. If the answer is yes, let $X$ and $S$ be the sets constructed after Step~(\ref{stepY}). Let $X_A\bigcup X_R$ be a partition of $X$ on inputs that are accepted by $M$ and on inputs that are rejected by $M$. Then $M$ accepts the language
$$L=\bigcup_{x_1x_2\cdots x_k\in X_A}\{x_1L_{s_1,S}x_2L_{s_2,S}\cdots x_k L_{s_{k},S}\}\bigcup E,$$
where 
\begin{itemize}\setlength{\itemsep}{-0mm}
\item $\tup{x_1}{x_2}{x_k}$ are symbols of some word from $X_A$,
\item $\tup{s_1}{s_2}{s_{k}}$ are crossing sequences generated on boundaries $\{\tup{1}{2}{k}\}$ on simulation of $M$ on input $x_1x_2\cdots x_k$, and
\item if $M$ accepts $\epsilon$, then $E=\{\epsilon\}$, else $E=\emptyset$.
\end{itemize}
We can use Lemma~\ref{regular} to construct a deterministic finite automaton that recognizes the language $L$.\qed
\end{proof}
\section{Acknowledgements}
The author wishes to thank his research advisor Sergio Cabello for many enlightening discussions and Andrej Bauer, Jurij Mihelič and Marko Petkov\v{s}ek for valuable comments.
%
%
%
\newpage
\bibliographystyle{abbrv}
\bibliography{literature}

\end{document}